\relax
\documentclass[letterpaper]{article} 
\usepackage{aaai22}  
\usepackage{times}  
\usepackage{helvet}  
\usepackage{courier}  
\usepackage[hyphens]{url}  
\usepackage{graphicx} 
\urlstyle{rm} 
\usepackage{natbib}  
\usepackage{caption} 
\DeclareCaptionStyle{ruled}{labelfont=normalfont,labelsep=colon,strut=off} 
\frenchspacing  
\setlength{\pdfpagewidth}{8.5in}  
\setlength{\pdfpageheight}{11in}  
%

%

\usepackage{amsfonts}
\usepackage{amssymb,amsthm,amsmath}
\usepackage{esint}
\usepackage{isomath}
\usepackage{latexsym}
\usepackage{epic}
\usepackage{epsfig}
\usepackage{multirow}
\usepackage{hhline}
\usepackage{wrapfig}
\usepackage{verbatim}
\usepackage{enumitem}
\usepackage{color}
\usepackage{thmtools}
\usepackage{thm-restate}
\usepackage[ruled,vlined, linesnumbered, boxed]{algorithm2e}
\usepackage[labelformat=simple]{subcaption}

\usepackage{cleveref}
\usepackage{footmisc}
\usepackage{mathtools}
\usepackage{graphicx}

\newcommand{\R}{\mathbb{R}}

\newtheorem{theorem}{Theorem}
\newtheorem{definition}{Definition}

\newtheorem{proposition}{Proposition}

\newcommand{\decision}{x}
\newcommand{\decisionset}{\mathcal{X}}

\newcommand{\environment}{\pi}
\newcommand{\environmentset}{\Pi}

\newcommand{\norm}[1]{\left\lVert#1\right\rVert}

%
%
\pdfinfo{
/Title (AAAI Press Formatting Instructions for Authors Using LaTeX -- A Guide)
/Author (AAAI Press Staff, Pater Patel Schneider, Sunil Issar, J. Scott Penberthy, George Ferguson, Hans Guesgen, Francisco Cruz, Marc Pujol-Gonzalez)
/TemplateVersion (2022.1)
}

\setcounter{secnumdepth}{0} 

%


\title{Coordinating Followers to Reach Better Equilibria: End-to-End Gradient Descent for Stackelberg Games}
\author {
    Kai Wang,\textsuperscript{\rm 1}
    Lily Xu,\textsuperscript{\rm 1}
    Andrew Perrault,\textsuperscript{\rm 2}
    Michael K. Reiter,\textsuperscript{\rm 3}
    Milind Tambe\textsuperscript{\rm 1}
}
\affiliations {
    \textsuperscript{\rm 1}Harvard University, 
    \textsuperscript{\rm 2}The Ohio State University, 
    \textsuperscript{\rm 3}Duke University\\
    \{kaiwang,lily\_xu\}@g.harvard.edu, perrault.17@osu.edu,
    michael.reiter@duke.edu, milind\_tambe@harvard.edu
}

\usepackage{bibentry}

\begin{document}

\maketitle

\begin{abstract}
A growing body of work in game theory extends the traditional Stackelberg game to settings with one leader and multiple followers who play a Nash equilibrium. Standard approaches for computing equilibria in these games reformulate the followers' best response as constraints in the leader's optimization problem. These reformulation approaches can sometimes be effective, but make limiting assumptions on the followers' objectives and the equilibrium reached by followers, e.g., uniqueness, optimism, or pessimism. To overcome these limitations, we run gradient descent to update the leader's strategy by differentiating through the equilibrium reached by followers. Our approach generalizes to any stochastic equilibrium selection procedure that chooses from multiple equilibria, where we compute the stochastic gradient by back-propagating through a sampled Nash equilibrium using the solution to a partial differential equation to establish the unbiasedness of the stochastic gradient. Using the unbiased gradient estimate, we implement the gradient-based approach to solve three Stackelberg problems with multiple followers. Our approach consistently outperforms existing baselines to achieve higher utility for the leader.
\end{abstract}
\section{Introduction}
Stackelberg games are commonly adopted in many real-world applications, including security~\cite{jiang2013defender,gan2020mechanism}, wildlife conservation~\cite{fang2016deploying}, and commercial decisions made by firms~\cite{naghizadeh2014voluntary,aussel2020trilevel,zhang2016multi}.
Moreover, many realistic settings involve a single leader with multiple self-interested followers such as wildlife conservation efforts with a central coordinator and a team of defenders~\cite{gan2018stackelberg,gan2020mechanism}; resource management in energy~\cite{aussel2020trilevel} with suppliers, aggregators, and end users; or security problems with a central insurer and a set of vulnerable agents~\cite{naghizadeh2014voluntary,johnson2011security}. Solving Stackelberg games with multiple followers is challenging in general~\cite{basilico2017methods,coniglio2020computing}. Previous work often reformulates the followers' best response as stationary and complementarity constraints in the leader's optimization~\cite{shi2005kth,basilico2020bilevel,basilico2017methods,coniglio2020computing,calvete2007linear}, casting the entire Stackelberg problem as a single optimization problem.
This reformulation approach has achieved significant success in problems with linear or quadratic objectives, assuming a unique equilibrium or a specific equilibrium concept, e.g., followers' optimistic or pessimistic choice of equilibrium ~\cite{hu2011variational,basilico2020bilevel,basilico2017methods}.
The reformulation approach thoroughly exploits the structure of objectives and equilibrium to conquer the computation challenge.
However, when these conditions are not met, reformulation approach may get trapped in low-quality solutions.

In this paper, we propose an end-to-end gradient descent approach to solve multi-follower Stackelberg games.
Specifically, we run gradient descent by back-propagating through a sampled Nash equilibrium reached by followers to update the leader's strategy.
Our approach overcomes weaknesses of reformulation approaches as (i)~we decouple the leader's optimization problem from the followers', casting it as a learning problem to be solved by end-to-end gradient descent through the followers' equilibrium; and (ii)~back-propagating through a sampled Nash equilibrium enables us to work with arbitrary equilibrium selection procedures and multiple equilibria.

In short, we make several contributions.
First, we provide a procedure for differentiating through a Nash equilibrium assuming uniqueness (later we relax the assumption). Because each follower must simultaneously best respond to every other follower, the Karush--Kuhn--Tucker (KKT) conditions~\cite{kuhn2014nonlinear} for each follower must be simultaneously satisfied. We can thus differentiate through the system of KKT conditions and apply the implicit function theorem to obtain the gradient. Second, we relax the uniqueness assumption and extend our approach to an arbitrary, potentially stochastic, equilibrium selection oracle. We first show that given a stochastic equilibrium selection procedure, using optimistic or pessimistic assumptions to solve Stackelberg games with stochastic equilibria can yield payoff to the leader that is arbitrarily worse than optimal. 
To address the issue of multiple equilibria and stochastic equilibria, we formally characterize stochastic equilibria with a concept we call \emph{equilibrium flow}, defined by a partial differential equation. 
The equilibrium flow ensures the stochastic gradient computed from the sampled Nash equilibrium is unbiased, allowing us to run stochastic gradient descent to differentiate through the stochastic equilibrium.
We also discuss how to compute the equilibrium flow either from KKT conditions under certain sufficient conditions or by solving the partial differential equation.
This paper is the first to guarantee that the gradient computed from an arbitrary stochastic equilibrium sampled from multiple equilibria is a differentiable, unbiased sample.
Third, to address the challenge that the feasibility of the leader's strategy may depend on the equilibrium reached by the followers (e.g., when a subsidy paid to the followers is conditional on their actions as in~\cite{rotemberg2019equilibrium,mortensen2001taxes}), we use an augmented Lagrangian method to convert the constrained optimization problem into an unconstrained one.
The Lagrangian method combined with our unbiased Nash equilibrium gradient estimate enables us to run stochastic gradient descent to optimize the leader's payoff while also satisfying the equilibrium-dependent constraints.

We conduct experiments to evaluate our approach in three different multi-follower Stackelberg games: normal-form games with a leader offering subsidies to followers, Stackelberg security games with a planner coordinating multiple defenders, and cyber insurance games with an insurer and multiple customers.
Across all three examples, the leader's strategy space is constrained by a budget constraint that depends on the equilibrium reached by the followers.
Our gradient-based method provides a significantly higher payoff to the leader evaluated at equilibrium, compared to existing approaches which fail to optimize the leader's utility and often produce large constraint violations. These results, combined with our theoretical contributions, demonstrate the strength of our end-to-end gradient descent algorithm in solving Stackelberg games with multiple followers.

\section*{Related Work}


\paragraph{Stackelberg models with multiple followers}
Multi-follower Stackelberg problems have received a lot of attention in domains with a hierarchical leader-follower structure~\cite{nakamura2015one,zhang2016multi,liu1998stackelberg,solis2016modeling,sinha2014finding}.
Although single-follower normal-form Stackelberg games can be solved in polynomial time~\cite{korzhyk2010complexity,blum2019computing}, the problem becomes NP-hard when multiple followers are present, even when the equilibrium is assumed to be either optimistic or pessimistic~\cite{basilico2020bilevel,coniglio2020computing}.
Existing approaches~\cite{basilico2020bilevel,aussel2020trilevel} primarily leverage the leader-follower structure in a bilevel optimization formulation~\cite{colson2007overview}, which can be solved by reformulating the followers' best response into non-convex stationary and complementarity constraints in the leader's optimization problem~\cite{sinha2019using}.
Various optimization techniques, including branch-and-bound~\cite{coniglio2020computing} and mixed-integer programs~\cite{basilico2020bilevel}, are adopted to solve the reformulated problems. 
However, these reformulation approaches highly rely on well-behaved problem structure, which may encounter large mixed integer non-linear programs when the followers have non-quadratic objectives.
Additionally, these approaches mostly assume uniqueness of equilibrium or a specific equilibrium concept, e.g., optimistic or pessimistic, which may not be feasible~\cite{gan2018stackelberg}.
Previous work on the stochastic equilibrium drawn from multiple equilibria in Stackelberg problems~\cite{lina1996hierarchical} mainly focuses on the existence of an optimal solution, while our work focuses on actually solving the Stackelberg problems to identify the best action for the leader.

In contrast, our approach solves the Stackelberg problem by differentiating through the equilibrium reached by followers to run gradient descent in the leader's problem. Our approach also applies to any stochastic equilibrium drawn from multiple equilibria by establishing the unbiasedness of the gradient computed from a sampled equilibrium using a partial differential equation.

\paragraph{Differentiable optimization}
When there is only a single follower optimizing his utility function, differentiating through a Nash equilibrium reduces to the framework of differentiable optimization~\cite{pirnay2012optimal,amos2017optnet,agrawal2019differentiable,bai2019deep}.
When there are two followers with conflicting objectives (zero-sum), differentiating through a Nash equilibrium reduces to a differentiable minimax formulation~\cite{ling2018game,ling2019large}.
Lastly, when there are multiple followers,~\citeauthor{li2020end}~\shortcite{li2020end} follow the sensitivity analysis and variational inequalities (VIs) literature~\cite{mertikopoulos2019learning,ui2016bayesian,dafermos1988sensitivity,parise2019variational} to express a unique Nash equilibrium as a fixed-point to the projection operator in VIs to differentiate through the equilibrium. ~\citeauthor{li2021solving}~\shortcite{li2021solving} further extend the same approach to structured hierarchical games.
Nonetheless, these approaches rely on the uniqueness of Nash equilibrium. In contrast, our approach generalizes to multiple equilibria.

\section{Stackelberg Games With a Single Leader and Multiple Followers}\label{sec:model}
In this paper, we consider a Stackelberg game composed of one leader and $n$ followers.
The leader first chooses a strategy $\environment \in \environmentset$ that she announces, then the followers observe the leader's strategy and respond accordingly.
When the leader's strategy $\environment$ is determined, the followers form an $n$-player simultaneous game with $n$~followers, where the $i${\nobreakdash-}th follower minimizes his own objective function $f_i(\decision_i, \decision_{-i}, \environment)$, which depends on his own action $\decision_i \in \decisionset_i$, other followers' actions $\decision_{-i} \in \decisionset_{-i}$, and the leader's strategy $\environment \in \environmentset$.
We assume that each strategy space is characterized by linear constraints: $\decisionset_i = \{ \decision_i \mid A_i \decision_i = b_i, G_i \decision_i \leq h_i \}$.
We also assume perfect information---all the followers know other followers' utility functions and strategy spaces.

\subsection{Nash Equilibria}
We call $\boldsymbol\decision^* = \{ \decision_1^*, \decision_2^*, \ldots, \decision_n^* \}$ a Nash equilibrium if no follower has an incentive to deviate from their current strategy, where we assume each follower \textit{minimizes}\footnote{We use minimization formulation to align with the convention in convex optimization. In our experiments, examples of maximization problems are used, but the same approach applies.} his objective:
\begin{align}\label{eqn:nash}
    & \forall i: f_i(\decision^*_i, \decision_{-i}^*, \environment) \leq f_i(\decision_i, \decision_{-i}^*, \environment) & \forall \decision_i \in \decisionset_i.
\end{align}
As shown in Figure~\ref{fig:framework}, when the leader's strategy $\environment$ is chosen and passed to an $n$-player game composed of all followers, we assume the followers converge to a Nash equilibrium $\boldsymbol\decision^*$. 

In the first section, we assume there is a unique Nash equilibrium returned by an oracle $\boldsymbol\decision^* = \mathcal{O}(\environment)$.
We later generalize to the case where there are multiple equilibria with a stochastic equilibrium selection oracle which randomly outputs an equilibrium $\boldsymbol\decision \sim \mathcal{O}(\environment)$ drawn from a distribution with probability density function $p(\cdot, \environment): \decisionset \rightarrow \R_{\geq 0}$. 

\subsection{Leader's Optimization Problem}
When the leader chooses a strategy $\environment$ and all the followers reach an equilibrium $\boldsymbol\decision^*$, the leader receives a payoff $f(\boldsymbol\decision^*, \environment)$ and a constraint value $g(\boldsymbol\decision^*, \environment)$.
The goal of the Stackelberg leader is to choose an optimal $\environment$ to maximize her utility while satisfying the constraint.

\begin{definition}[Stackelberg problems with multiple followers and unique Nash equilibrium]\label{def:optimization-problem}
The leader chooses a strategy $\environment$ to maximize her utility function~$f$ subject to constraints~$g$ evaluated at the unique equilibrium $\boldsymbol\decision^*$ induced by an equilibrium oracle $\mathcal{O}$, i.e.,:
\begin{align}\label{eqn:environment-optimization}
    & \max\nolimits_{\environment} f(\boldsymbol\decision^*, \environment) & \text{s.t.} \quad \boldsymbol\decision^* = \mathcal{O}(\environment), \quad g(\boldsymbol\decision^*, \environment) \leq 0.
\end{align}
\end{definition}
This problem is hard because the objective $f(\boldsymbol\decision^*, \environment)$ depends on the Nash equilibrium $\boldsymbol\decision^*$ reached by the followers.
Moreover, notice that the feasibility constraint $g(\boldsymbol\decision^*, \environment)$ also depends on the equilibrium, which creates a complicated feasible region for the leader's strategy $\environment$. 



\begin{figure}[t]
    \centering
    \includegraphics[width=\linewidth]{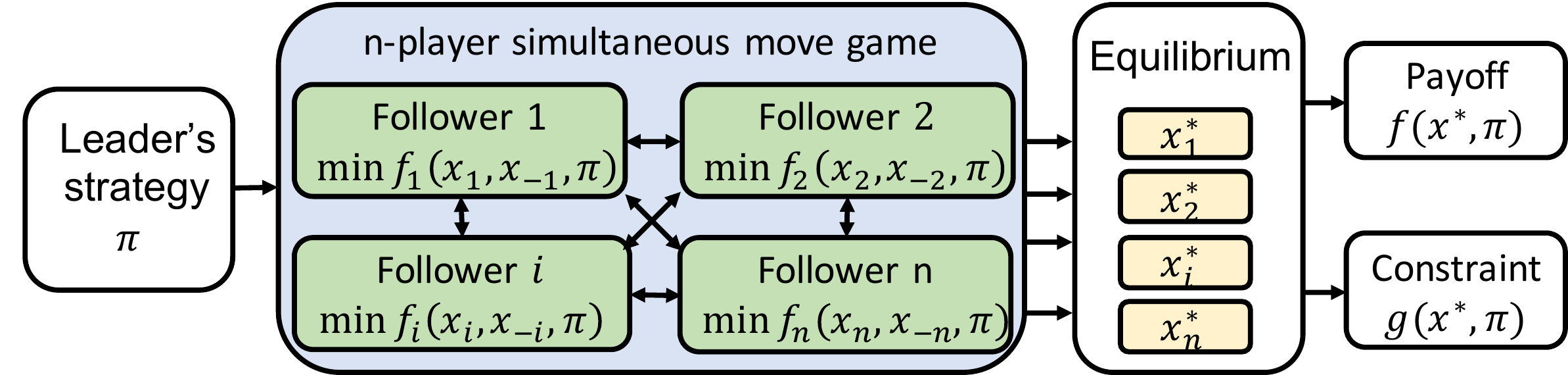}
    \caption{Given leader's strategy $\environment$, followers respond to the leader's strategy and reach a Nash equilibrium $\boldsymbol\decision^*$. The leader's payoff and the constraint depend on both the leader's strategy $\environment$ and the equilibrium $\boldsymbol\decision^*$.}
    \label{fig:framework}
\end{figure}

\subsection{Gradient Descent Approach}
To solve the leader's optimization problem, we propose to run gradient descent to optimize the leader's objective.
This requires us to compute the following gradient:
\begin{align}~\label{eqn:chain-rule}
\frac{d f(\boldsymbol\decision^*, \environment)}{d \environment} = f_\environment(\boldsymbol\decision^*, \environment) + f_{\boldsymbol\decision}(\boldsymbol\decision^*, \environment
) \cdot \frac{d \boldsymbol\decision^*}{d \environment}.
\end{align}
The terms $f_\environment, f_{\boldsymbol\decision
}$ above are easy to compute since the payoff function~ $f$ is explicitly given.
The main challenge is to compute $\frac{d \boldsymbol\decision^*}{d \environment}$ because it requires estimating how the Nash equilibrium $\boldsymbol\decision^*$ reached by followers responds to any change in the leader's strategy $\environment$.

\section{Gradient of Unique Nash Equilibrium}\label{sec:method}
In this section, we assume a unique Nash equilibrium reached by followers. Motivated by the technique proposed by~\citeauthor{amos2017optnet}~\shortcite{amos2017optnet}, we show how to differentiate through multiple KKT conditions to derive the derivative of a Nash equilibrium.


\subsection{Differentiating Through KKT Conditions}
Given the leader's strategy $\environment$, we express the KKT conditions of follower~$i$ with dual variables $\lambda^*_i$ and $\nu^*_i$ by:
\begin{align}\label{eqn:kkts}
    \left\{ \begin{array}{ll}
    \nabla_{\decision_i} f_i(\decision^*_i, \decision^*_{-i}, \environment) + G_i^\top \lambda_i^* + A_i^\top \nu_i^* = 0 \\
    \text{Diag}(\lambda^*_i) (G_i \decision^*_i - h_i) = 0 \\
    A_i \decision^*_i = b_i.
    \end{array} \right.
\end{align}

We want to estimate the impact of $\environment$ on the resulting Nash equilibrium $\boldsymbol\decision^*$. Supposing the objective functions $f_i \in C^2$ are twice-differentiable, we can compute the total derivative of the the KKT system in Equation~\ref{eqn:kkts} written in matrix form:
\begin{align}
    & \begin{bmatrix}
        \nabla^2_{\decision_i \decision_i} f_i & \nabla^2_{\decision_{-i} \decision_{i}} f_i & G^\top_i & A^\top_i \\
        \text{Diag}(\lambda^*_i) G_i & 0 & \text{Diag}(G_i \decision^*_i \! - \! h_i) & 0 \\
        A_i & 0 & 0 & 0
    \end{bmatrix} \!
    \begin{bmatrix}
    d \decision^*_i \\
    d \decision^*_{-i} \\
    d \lambda^*_i \\
    d \nu^*_i
    \end{bmatrix} \nonumber \\
    & =
    \begin{bmatrix}
    - \nabla^2_{\environment \decision_i} f_i d \environment - d G_i^\top \lambda^*_i - d A_i^\top \nu^*_i \\
    - \text{Diag}(\lambda^*_i) (d G_i \decision_i^* - d h_i) \\
    d b_i - d A_i \decision_i^*
    \end{bmatrix} \nonumber .
\end{align}

Since we assume the constraint matrices are constant, $d G_i, d h_i, d A_i, d b_i$ can be ignored. We concatenate the linear system for every follower $i$ and move $d \environment$ to the denominator:
\begin{align}
    \begin{bmatrix}
        \! \nabla_{\boldsymbol\decision} F \!\! & \!\! G^\top \!\! & \!\! A^\top \\
        \! \text{Diag}({\boldsymbol\lambda^*}) G \! & \! \text{Diag}(G \boldsymbol\decision^* \!\! - \!\! h) \!\! & \!\! 0 \!\! \\
        \! A \!\! & \!\! 0 \!\! & \!\! 0 \!\!
    \end{bmatrix} \!
    \begin{bmatrix} \!
    \frac{d \boldsymbol\decision^*}{d \environment} \\
    \frac{d \boldsymbol\lambda^*}{d \environment} \\
    \frac{d \boldsymbol\nu^*}{d \environment} \!
    \end{bmatrix}
    \!\! = \!\!
    \begin{bmatrix}
    - \nabla_{\environment} F \\
    0 \\
    0
    \end{bmatrix}\label{eqn:kkt_derivative_closed_form}
\end{align}
where $F = [(\nabla_{\decision_1} f_1)^\top, \dots, (\nabla_{\decision_n} f_n)^\top]^\top$ is a column vector, and $G = \text{Diag}(G_1, G_2, \dots, G_n), A = \text{Diag}(A_1, A_2, \dots, A_n)$ are the diagonalized placement of a list of matrices. 
In particular, the KKT matrix on the left-hand side of Equation~\ref{eqn:kkt_derivative_closed_form} matches the sensitivity analysis of Nash equilibria using variational inequalities~\cite{facchinei2014solving,dafermos1988sensitivity}.

\begin{proposition}\label{prop:implicit-function-theorem}
When the Nash equilibrium is unique and the KKT matrix in Equation~\ref{eqn:kkt_derivative_closed_form} is invertible, the implicit function theorem holds and $\frac{d \boldsymbol\decision^*}{d \environment}$ can be uniquely determined by Equation~\ref{eqn:kkt_derivative_closed_form}.
\end{proposition}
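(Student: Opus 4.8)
The plan is to read Equation~\ref{eqn:kkt_derivative_closed_form} as nothing more than the conclusion of the implicit function theorem (IFT) applied to the stacked system of all followers' KKT conditions, and then to verify that theorem's hypotheses. Concretely, collect the primal and dual variables of all $n$ followers into $\boldsymbol z = (\boldsymbol\decision, \boldsymbol\lambda, \boldsymbol\nu)$ and define $\Phi(\boldsymbol z, \environment)$ by stacking, over all $i$, the three blocks of Equation~\ref{eqn:kkts}; using the concatenations $F, G, A, h, b$ introduced after Equation~\ref{eqn:kkt_derivative_closed_form} this reads $\Phi(\boldsymbol z, \environment) = \bigl( F(\boldsymbol\decision, \environment) + G^{\top}\boldsymbol\lambda + A^{\top}\boldsymbol\nu,\ \ \text{Diag}(\boldsymbol\lambda)(G\boldsymbol\decision - h),\ \ A\boldsymbol\decision - b \bigr)$. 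By construction $\Phi(\boldsymbol z, \environment) = 0$ holds exactly when every follower's KKT system holds, so at the equilibrium reached by the followers---where each follower's convex best-response program, having affine constraints, satisfies a constraint qualification and hence its KKT conditions---we have $\Phi(\boldsymbol z^{*}, \environment) = 0$ for suitable multipliers.

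First I would establish regularity: since each $f_i \in C^{2}$, the gradients $\nabla_{\decision_i} f_i$ are $C^{1}$ jointly in $(\boldsymbol\decision, \environment)$, the complementarity block is polynomial in $(\boldsymbol\decision, \boldsymbol\lambda)$ (and, by the assumption that the constraint data are fixed, independent of $\environment$), and the feasibility block is affine; hence $\Phi$ is $C^{1}$ on a neighborhood of $(\boldsymbol z^{*}, \environment)$. Then I would compute the partial Jacobian $\partial_{\boldsymbol z}\Phi$ at $(\boldsymbol z^{*}, \environment)$ block by block: the stationarity rows differentiate to $[\nabla_{\boldsymbol\decision} F,\ G^{\top},\ A^{\top}]$, the complementarity rows to $[\text{Diag}(\boldsymbol\lambda^{*}) G,\ \text{Diag}(G\boldsymbol\decision^{*} - h),\ 0]$, and the feasibility rows to $[A,\ 0,\ 0]$. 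This is exactly the KKT matrix on the left-hand side of Equation~\ref{eqn:kkt_derivative_closed_form}, while $\partial_{\environment}\Phi = [\nabla_{\environment} F;\, 0;\, 0]$ is its right-hand side.

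With these two facts, the hypothesis that the KKT matrix is invertible is precisely the nonsingularity of $\partial_{\boldsymbol z}\Phi$ required by the IFT. Applying the theorem yields a neighborhood of $\environment$ on which there is a unique $C^{1}$ map $\environment \mapsto \boldsymbol z^{*}(\environment) = (\boldsymbol\decision^{*}(\environment), \boldsymbol\lambda^{*}(\environment), \boldsymbol\nu^{*}(\environment))$ solving $\Phi \equiv 0$, whose derivative is characterized by $\partial_{\boldsymbol z}\Phi \cdot \frac{d\boldsymbol z^{*}}{d\environment} = -\, \partial_{\environment}\Phi$---that is, Equation~\ref{eqn:kkt_derivative_closed_form}---and, since that matrix is invertible, $\frac{d\boldsymbol z^{*}}{d\environment}$, hence its top block-row $\frac{d\boldsymbol\decision^{*}}{d\environment}$, is uniquely determined. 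Finally, uniqueness of the Nash equilibrium forces this locally-defined $\boldsymbol\decision^{*}(\cdot)$ to agree with the oracle $\mathcal{O}(\cdot)$, so the quantity obtained is genuinely the sensitivity of the selected equilibrium.

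The main obstacle is not the mechanical Jacobian verification but the passage between Nash equilibria and zeros of $\Phi$: writing $\Phi = 0$ presumes the followers' programs are ones for which the KKT conditions are both necessary and sufficient (convexity together with a constraint qualification) and that the multipliers $(\boldsymbol\lambda^{*}, \boldsymbol\nu^{*})$ are themselves pinned down, whereas a priori a zero of $\Phi$ is only a KKT point rather than an equilibrium. It is also worth noting that the invertibility assumption silently rules out the degeneracies that would otherwise break differentiability of $\boldsymbol\decision^{*}$: if some inequality constraint were weakly active, with $\lambda^{*}_{i,j} = 0$ and $(G_i \decision^{*}_i - h_i)_j = 0$ simultaneously, the corresponding row of the complementarity block would vanish and the KKT matrix would be singular, so strict complementarity and a locally constant active set come for free. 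Making these points precise---rather than the calculus---is where the care is needed.
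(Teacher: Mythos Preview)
Your proposal is correct and follows exactly the approach the paper has in mind: stack the followers' KKT systems into a single $C^1$ map $\Phi(\boldsymbol z,\environment)$, identify the left-hand matrix of Equation~\ref{eqn:kkt_derivative_closed_form} as its partial Jacobian $\partial_{\boldsymbol z}\Phi$, and invoke the implicit function theorem under the invertibility hypothesis. The paper does not spell out a standalone proof of Proposition~\ref{prop:implicit-function-theorem}---it treats the proposition as an immediate consequence of the total-derivative computation and later reuses the same IFT step in the proof of Theorem~\ref{thm:sufficient-conditions}---so your write-up is in fact more detailed than the paper's own treatment, and your closing remarks on strict complementarity and the KKT-versus-equilibrium identification are useful caveats the paper leaves implicit.
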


Proposition~\ref{prop:implicit-function-theorem} ensures the sufficient conditions for applying Equation~\ref{eqn:kkt_derivative_closed_form} to compute $\frac{d \boldsymbol\decision^*}{d \environment}$. Under these sufficient conditions, we can compute Equation~\ref{eqn:chain-rule} using Equation~\ref{eqn:kkt_derivative_closed_form}.




\begin{figure*}
\begin{subfigure}{.2\textwidth}
  \centering
    \begin{tabular}{c|c|c|c|}
    \multicolumn{1}{c}{} & \multicolumn{3}{c}{\scriptsize{Follower 1}} \\
    \cline{2-4}
    \multirow{3}{*}{\rotatebox[origin=c]{90}{\scriptsize{Follower 2}}} 
    & $1$ & $0$ & $0$ \\ \cline{2-4}
    & $0$ & $1$ & $0$ \\ \cline{2-4}
    & $0$ & $0$ & $1$ \\ \cline{2-4}
    \end{tabular}
  \caption{Strategy 1 payoffs}
  \label{fig:strat1}
\end{subfigure}%
\hspace*{0.75em}
\begin{subfigure}{.2\textwidth}
  \centering
  \begin{tabular}{c|c|c|c|}
    \multicolumn{1}{c}{} & \multicolumn{3}{c}{\scriptsize{Follower 1}} \\
    \cline{2-4}
    \multirow{3}{*}{\rotatebox[origin=c]{90}{\scriptsize{Follower 2}}} 
    & $0$ & $1$ & $0$ \\ \cline{2-4}
    & $0$ & $0$ & $1$ \\ \cline{2-4}
    & $1$ & $0$ & $0$ \\ \cline{2-4}
  \end{tabular}
  \caption{Strategy 2 payoffs}
  \label{fig:strat2}
\end{subfigure}%
\hspace*{0.75em}
\begin{subfigure}{.22\textwidth}
  \centering
    \begin{tabular}{c|c|c|c|}
    \multicolumn{1}{c}{} & \multicolumn{3}{c}{\scriptsize{Follower 1}} \\
    \cline{2-4}
    \multirow{3}{*}{\rotatebox[origin=c]{90}{\scriptsize{Follower 2}}} 
    & $0$ & $0$ & $1$ \\ \cline{2-4}
    & $1$ & $0$ & $0$ \\ \cline{2-4}
    & $0$ & $1$ & $0$ \\ \cline{2-4}
    \end{tabular}
  \caption{Strategy 3 payoffs}
  \label{fig:strat3}
\end{subfigure}%
\hspace*{0.75em}
\begin{subfigure}{.3\textwidth}
  \centering
    \begin{tabular}{c|c|c|c|}
    \multicolumn{1}{c}{} & \multicolumn{3}{c}{\scriptsize{Follower 1}} \\
    \cline{2-4}
    \multirow{3}{*}{\rotatebox[origin=c]{90}{\scriptsize{Follower 2}}} 
    & $C$ & $0$ & $-\epsilon$ \\ \cline{2-4}
    & $C-\epsilon$ & $0$ & $0$ \\ \cline{2-4}
    & $0$ & $C-\epsilon$ & $-C$ \\ \cline{2-4}
    \end{tabular}
  \caption{Leader payoffs}
  \label{fig:strat-leader}
\end{subfigure}%
    \caption{Payoff matrices from Theorem~\ref{thm:stochastic-eq} where the leader has 3 strategies. Follower payoffs for each strategy in~(a)--(c) where both followers receive the same payoff; leader payoffs in~(d).
    }
    \label{fig:stochastic-eq-example}
\end{figure*}

\section{Gradient of Stochastic Equilibrium}\label{sec:multiple-equilibria}
In the previous section, we showed how to compute the gradient of a Nash equilibrium when the equilibrium is unique. However, this can be restrictive because Stackelberg games with multiple followers often have multiple equilibria that the followers can stochastically reach one. For example, both selfish routing games in the traffic setting~\cite{roughgarden2004stackelberg} and security games with multiple defenders~\cite{gan2018stackelberg} can have multiple equilibria that are reached in multiple independent runs.

In this section, we first demonstrate the importance of stochastic equilibrium by showing that optimizing over optimistic or pessimistic equilibrium could lead to arbitrarily bad leader's payoff under the stochastic setting. Second, we generalize our gradient computation to the case with multiple equilibria, allowing the equilibrium oracle $\mathcal{O}$ to stochastically return a sample equilibrium from a distribution of multiple equilibria. Lastly, we discuss how to compute the gradient of different types of equilibria and its limitation.

\subsection{Importance of Stochastic Equilibrium}
When the equilibrium oracle is stochastic, our Stackelberg problem becomes a stochastic optimization problem:
\begin{definition}[Stackelberg problems with multiple followers and stochastic Nash equilibria]\label{def:stochastic-problem}
The leader chooses a strategy $\environment$ to optimize her expected utility and satisfy the constraints in expectation under a given stochastic equilibrium oracle $\mathcal{O}$:
\begin{align}\label{eqn:expected-utility}
    \max\limits_{\environment} \mathop{\mathbb{E}}\limits_{\boldsymbol\decision^* \sim \mathcal{O}(\environment)} ~f(\boldsymbol\decision^*, \environment) \quad \text{s.t.} ~ \mathop{\mathbb{E}}\limits_{\boldsymbol\decision^* \sim \mathcal{O}(\environment)} ~g(\boldsymbol\decision^*, \environment) \leq 0.
\end{align}
\end{definition}

In particular, we show that if we ignore the stochasticity of equilibria by simply assuming optimistic or pessimistic equilibria, the leader's expected payoff can be arbitrarily worse than the optimal one.
\begin{theorem}
\label{thm:stochastic-eq}
Assuming the followers stochastically reach a Nash equilibrium drawn from a distribution over all equilibria, solving a Stackelberg game under the assumptions of optimistic or pessimistic equilibrium can give the leader expected payoff that is arbitrarily worse than the optimal one.
\end{theorem}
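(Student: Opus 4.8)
The plan is to prove the theorem by exhibiting the explicit instance drawn in Figure~\ref{fig:stochastic-eq-example} together with a suitable stochastic equilibrium oracle, and then showing by direct computation that the optimistic and the pessimistic reformulations are each forced to commit the leader to a strategy whose true expected payoff stays bounded while the genuine stochastic optimum diverges. In the instance the leader has three strategies, and committing to strategy~$k$ induces an identical-payoff $3\times3$ game among the two followers whose common payoff matrix is panel~(a), (b), or~(c). The first step is to pin down the Nash equilibria of each induced game: since each of these three matrices is a permutation matrix, the three cells carrying a $1$ are pure equilibria, and every cell carrying a $0$ fails to be one because the follower indexing that cell's row or column can profitably deviate to the $1$ in that line. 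This gives the equilibrium sets $\{(1,1),(2,2),(3,3)\}$ under strategy~1, $\{(1,2),(2,3),(3,1)\}$ under strategy~2, and $\{(1,3),(2,1),(3,2)\}$ under strategy~3; reading panel~(d) off at these cells, the leader's payoffs at the equilibria of strategies~1,~2,~3 are $\{C,0,-C\}$, $\{0,0,0\}$, and $\{-\epsilon,\,C-\epsilon,\,C-\epsilon\}$ respectively.

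Next I would take the stochastic equilibrium oracle to return, for whichever follower game has been induced, each of its three (pure) equilibria with probability $1/3$. Under this oracle the leader's true expected payoff from strategy~1 is $\tfrac13(C+0-C)=0$, from strategy~2 is $0$, and from strategy~3 is $\tfrac13\big((-\epsilon)+(C-\epsilon)+(C-\epsilon)\big)=\tfrac{2C-3\epsilon}{3}$. Hence the optimum of the stochastic problem in Definition~\ref{def:stochastic-problem} equals $\tfrac{2C-3\epsilon}{3}$, attained at strategy~3, and fixing any $\epsilon>0$ and letting $C\to\infty$ drives this optimum to infinity.

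It then remains to evaluate the reformulations. The optimistic solver scores each strategy by the \emph{best} leader payoff among that strategy's equilibria, namely $C$, $0$, and $C-\epsilon$; since $C>C-\epsilon>0$, it commits the leader to strategy~1, whose true expected payoff under the oracle is $0$. The pessimistic solver scores each strategy by the \emph{worst} such payoff, namely $-C$, $0$, and $-\epsilon$; since $0>-\epsilon>-C$, it commits the leader to strategy~2, whose true expected payoff is also $0$. In both cases the leader's realized expected payoff is $0$ while the stochastic optimum is $\tfrac{2C-3\epsilon}{3}$, so the gap (additive, and in fact multiplicative) is unbounded as $C\to\infty$, which is exactly the claim. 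Along the way I would emphasize that the small parameter $\epsilon$ acts as a strict tie-breaker: with $\epsilon=0$, strategy~3 would tie strategy~1 for the optimist and tie strategy~2 for the pessimist, so the $\epsilon$-perturbations in panel~(d) are precisely what force each solver into the bad commitment.

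The construction is essentially the entire proof, so the places I would slow down are (i)~the equilibrium verification --- confirming that the three listed profiles are the equilibria that the optimistic and pessimistic operators and the oracle actually range over, so that no additional equilibrium changes the best- or worst-case leader payoffs (identical-payoff permutation-matrix games keep this check short); and (ii)~the simultaneous design of panel~(d), which must contain a large positive entry to lure the optimist onto strategy~1 and a large negative entry to repel the pessimist from strategy~1, while keeping strategy~3 --- the genuinely good choice under stochastic play --- just barely unattractive to both worst-case reasoners. Everything else reduces to the arithmetic above.
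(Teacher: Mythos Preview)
Your proposal is correct and follows essentially the same approach as the paper: you use the identical three-strategy instance of Figure~\ref{fig:stochastic-eq-example}, the uniform oracle over the three pure equilibria of each induced coordination game, and the same arithmetic showing that the optimistic and pessimistic solvers commit to strategies~1 and~2 respectively (each yielding true expected payoff~$0$) while strategy~3 yields $\tfrac{2C-3\epsilon}{3}\to\infty$. Your exposition is in fact more detailed than the paper's---you explicitly verify the equilibrium sets, spell out the max/min computations that pin down the optimistic and pessimistic choices, and explain the tie-breaking role of~$\epsilon$---but the underlying construction and argument are identical.
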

\begin{proof}
We consider a Stackelberg game with one leader and two followers (row and column player) with no constraint. The leader can choose 3 different strategies, each corresponding to a payoff matrix in Figure~\ref{fig:strat1}--\ref{fig:strat3}, where both followers receive the same payoff in the entry when they choose the corresponding row and column. In each payoff matrix, there are three pure Nash equilibria; we assume the followers reach any of them uniformly at random. After the followers reach a Nash equilibrium, the leader receives the corresponding entry in the payoff matrix in Figure~\ref{fig:strat-leader}.

Under the optimistic assumption, the leader would choose strategy~$1$, expecting followers to break the tie in favor of the leader, yielding payoff $C$. Instead, the three followers select a Nash equilibria uniformly at random, yielding expected payoff $\frac{C+0-C}{3} = 0$. Under the pessimistic assumption, the leader chooses strategy~$2$, anticipating and receiving an expected payoff of zero. Under the correct stochastic assumption, she chooses strategy~$3$ with expected payoff $\frac{C - \epsilon + C - \epsilon - \epsilon}{3} = \frac{2}{3} C - \epsilon \gg 0$, which can be arbitrarily higher than the optimistic or pessimistic payoff when $C \to \infty$.
\end{proof}

Theorem~\ref{thm:stochastic-eq} justifies why we need to work on stochastic equilibrium when the equilibrium is drawn stochastically in Definition~\ref{def:stochastic-problem}.
In the following section, we show how to apply gradient descent to optimize the leader's payoff by differentiating through followers' equilibria with a stochastic oracle.



\subsection{Equilibrium Flow and Unbiased Gradient Estimate}
Our goal is to compute the gradient of the objective in Equation~\ref{eqn:expected-utility}: $\frac{d}{d \environment} \mathop{\mathbb{E}}_{\boldsymbol\decision^* \sim \mathcal{O}(\environment)} f(\boldsymbol\decision^*, \environment)$. However, since the distribution of the oracle $\mathcal{O}(\environment)$ can also depend on $\environment$, we cannot easily exchange the gradient operator into the expectation.

To address the dependency of the oracle $\mathcal{O}(\environment)$ on $\environment$, we use $p(\boldsymbol\decision, \environment)$ to represent the probability density function of the oracle $\boldsymbol\decision \sim \mathcal{O}(\environment)$ for every $\environment$. We want to study how the oracle distribution changes as the leader's strategy $\environment$ changes, which we denote by {\it equilibrium flow} as defined by the following partial differential equation:

\begin{definition}[Equilibrium Flow]\label{def:conservation-law}
We call $v(\boldsymbol\decision, \environment)$ the equilibrium flow of the oracle $\mathcal{O}$ with probability density function $p(\boldsymbol\decision, \environment)$ if $v(\boldsymbol\decision, \environment)$ satisfies the following equation:
\begin{align}\label{eqn:conservation-law}
&\frac{\partial}{\partial \environment} p(\boldsymbol\decision, \environment) = - \nabla_{\boldsymbol\decision} \cdot (p(\boldsymbol\decision, \environment) v(\boldsymbol\decision, \environment)).
\end{align}
\end{definition}
This differential equation is similar to many differential equations of various conservation laws, where $v(\boldsymbol\decision, \environment)$ serves as a velocity term to characterize the movement of equilibria.
In the following theorem, we use the equilibrium flow $v(\boldsymbol\decision, \environment)$ to address the dependency of $\mathcal{O}(\environment)$ on $\environment$.

\begin{restatable}[]{theorem}{derivativeExpectation}\label{thm:expectation-of-derivative}
If $v(\boldsymbol\decision^*, \environment)$ is the equilibrium flow of the stochastic equilibrium oracle $\mathcal{O}(\environment)$, we have:
\begin{align}
    & \frac{d}{d \environment} \mathop{\mathbb{E}}\nolimits_{\boldsymbol\decision^* \sim \mathcal{O}(\environment)} f(\boldsymbol\decision^*, \environment) \nonumber \\
    = & \mathop{\mathbb{E}}\nolimits_{\boldsymbol\decision^* \sim \mathcal{O}(\environment)} \left[ f_\environment(\boldsymbol\decision^*, \environment) + f_{\boldsymbol\decision}(\boldsymbol\decision^*, \environment) \cdot v(\boldsymbol\decision^*, \environment) \right]. \label{eqn:equilibrium-sampling}
\end{align}
\end{restatable}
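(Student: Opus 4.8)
The plan is to recognize Equation~\ref{eqn:equilibrium-sampling} as a Reynolds‑transport (continuity‑equation) identity and prove it by differentiating under the integral sign and then integrating by parts. First I would write the objective explicitly as an integral against the oracle density over the joint strategy space $\decisionset = \prod_i \decisionset_i$:
$\mathbb{E}_{\boldsymbol\decision^* \sim \mathcal{O}(\environment)} f(\boldsymbol\decision^*, \environment) = \int_{\decisionset} f(\boldsymbol\decision, \environment)\, p(\boldsymbol\decision, \environment)\, d\boldsymbol\decision$. Differentiating in $\environment$ and exchanging $\tfrac{d}{d\environment}$ with $\int$ — justified by a dominated‑convergence (Leibniz‑rule) argument, using that $f$ is $C^1$, that $p$ and $\partial_\environment p$ are continuous, and that $\decisionset$ is compact, so all integrands are uniformly bounded on a neighborhood of $\environment$ — yields two terms: $\int_{\decisionset} f_\environment(\boldsymbol\decision, \environment)\, p(\boldsymbol\decision, \environment)\, d\boldsymbol\decision$, which is already $\mathbb{E}[f_\environment(\boldsymbol\decision^*,\environment)]$, plus $\int_{\decisionset} f(\boldsymbol\decision, \environment)\, \partial_\environment p(\boldsymbol\decision, \environment)\, d\boldsymbol\decision$.

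For the second term I would substitute the equilibrium‑flow equation, $\partial_\environment p = -\nabla_{\boldsymbol\decision}\cdot(p v)$, to obtain $-\int_{\decisionset} f\, \nabla_{\boldsymbol\decision}\cdot(p v)\, d\boldsymbol\decision$, and then apply the divergence theorem: $-\int_{\decisionset} f\,\nabla_{\boldsymbol\decision}\cdot(p v)\, d\boldsymbol\decision = \int_{\decisionset}\nabla_{\boldsymbol\decision} f\cdot(p v)\, d\boldsymbol\decision - \oint_{\partial\decisionset} f\,(p v)\cdot \boldsymbol n\, dS$. Assuming the boundary flux vanishes, the surviving volume term is exactly $\int_{\decisionset} p\, (f_{\boldsymbol\decision}\cdot v)\, d\boldsymbol\decision = \mathbb{E}[f_{\boldsymbol\decision}(\boldsymbol\decision^*,\environment)\cdot v(\boldsymbol\decision^*,\environment)]$. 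Adding the two pieces gives Equation~\ref{eqn:equilibrium-sampling}. If $\environment$ is vector‑valued, the identity is read coordinate‑wise in $\environment$ with $f_{\boldsymbol\decision}\cdot v$ the corresponding Jacobian–vector product; the argument is unchanged.

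The main obstacle is controlling the boundary term $\oint_{\partial\decisionset} f\,(p v)\cdot\boldsymbol n\, dS$, since each $\decisionset_i$ is a polytope with genuine boundary. I would handle it in one of two ways. If $p(\cdot,\environment)$ is supported in the interior of $\decisionset$ (e.g.\ fully mixed equilibria), then $p\equiv 0$ on $\partial\decisionset$ and the flux is zero. More generally, because every sampled equilibrium must remain feasible for all $\environment$, transported probability mass cannot leave $\decisionset$, so any admissible equilibrium flow $v$ must be tangent to the active faces of $\partial\decisionset$ wherever $p>0$, i.e.\ $(p v)\cdot\boldsymbol n = 0$ on $\partial\decisionset$, and the flux again vanishes. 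I would therefore state this tangency condition — together with $v$ bounded and $p(\cdot,\environment)$ continuously differentiable in $\environment$ — as the standing regularity hypothesis under which the theorem holds. The interchange of differentiation and integration is the only other technical point and is routine under these hypotheses.
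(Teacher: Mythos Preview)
Your proposal is correct and follows essentially the same route as the paper: write the expectation as an integral against $p$, differentiate under the integral, substitute the continuity equation $\partial_\environment p = -\nabla_{\boldsymbol\decision}\cdot(pv)$, and integrate by parts (the paper phrases this as Stokes' theorem) to move the divergence onto $f$. The only difference is that the paper dispatches the boundary term by assuming $p=0$ on $\partial\decisionset$, whereas you also offer the more general tangency argument $(pv)\cdot\boldsymbol n=0$; either suffices, and your added regularity discussion is more careful than the paper's.
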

\begin{proof}[Proof sketch]
To compute the derivative on the left-hand side, we can expand the expectation by:
\begin{align}
    & \frac{d}{d \environment} \mathop{\mathbb{E}}\nolimits_{\boldsymbol\decision^* \sim \mathcal{O}(\environment)} f(\boldsymbol\decision^*, \environment) = \frac{d}{d \environment}  \int f(\boldsymbol\decision, \environment) p(\boldsymbol\decision, \environment) d \boldsymbol\decision 
    \nonumber \\ 
    & = \int p(\boldsymbol\decision, \environment) \frac{\partial}{\partial \environment} f(\boldsymbol\decision, \environment) \! + \! f(\boldsymbol\decision, \environment) \frac{\partial}{\partial \environment} p(\boldsymbol\decision, \environment) d \boldsymbol\decision \nonumber \\
    & = \mathop{\mathbb{E}}\nolimits_{\boldsymbol\decision^* \sim \mathcal{O}(\environment)} f_\environment(\boldsymbol\decision^*, \environment) + \int f(\boldsymbol\decision, \environment) \frac{\partial}{\partial \environment} p(\boldsymbol\decision, \environment) d \boldsymbol\decision \label{eqn:differentiate-through-multiplication}.
\end{align}

We substitute the term $\frac{\partial}{\partial \environment} p = - \nabla_{\boldsymbol\decision} \cdot (p \cdot v)$ by the definition of equilibrium flow, and apply integration by parts and Stokes' theorem\footnote{The analysis of integration by parts and Stokes' theorem applies to both Riemann and Lebesgue integral. Lebesgue integral is needed when the set of equilibria forms a measure-zero set.} to the right-hand side of Equation~\ref{eqn:differentiate-through-multiplication} to get Equation~\ref{eqn:equilibrium-sampling}. More details can be found in the appendix.
\end{proof}
Theorem~\ref{thm:expectation-of-derivative} extends the derivative of Nash equilibrium to the case of stochastic equilibrium randomly drawn from multiple equilibria. Specifically, Equation~\ref{eqn:differentiate-through-multiplication} offers an efficient unbiased gradient estimate by sampling an equilibrium from the stochastic oracle to compute the right-hand side of Equation~\ref{eqn:differentiate-through-multiplication}.
Theorem~\ref{thm:expectation-of-derivative} also matches to Equation~\ref{eqn:chain-rule}, where the role of equilibrium flow $v(\boldsymbol\decision^*, \environment)$ coincides with the role of $\frac{d \boldsymbol\decision^*}{d \environment}$ in Equation~\ref{eqn:chain-rule}.

\subsection{How to Determine Equilibrium Flow}
The only remaining question is how to determine the equilibrium flow. Given the leader's strategy $\environment$, there are two types of equilibria: (i)~isolated equilibria and (ii)~non-isolated equilibria. We first show that the solution to Equation~\ref{eqn:kkt_derivative_closed_form} matches the equilibrium flow for every equilibrium in case~(i) when the probability of sampling the equilibrium is locally fixed.


\begin{restatable}[]{theorem}{sufficientConditions}\label{thm:sufficient-conditions}
Given the leader's strategy~$\environment$ and a sampled equilibrium~$\boldsymbol\decision$, if (1) the KKT matrix at $(\boldsymbol\decision, \environment)$ is invertible and (2) $\boldsymbol\decision$ is sampled with a fixed probability locally, the solution to Equation~\ref{eqn:kkt_derivative_closed_form} is a homogeneous solution to Equation~\ref{eqn:conservation-law} and matches the equilibrium flow $v(\environment, \boldsymbol\decision)$ locally.
\end{restatable}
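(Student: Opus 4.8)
The plan is to localize to a single isolated equilibrium via the implicit function theorem, and then verify in the weak form of Equation~\ref{eqn:conservation-law} that the velocity produced by Equation~\ref{eqn:kkt_derivative_closed_form} is exactly the equilibrium flow demanded there. First, hypothesis~(1) --- invertibility of the KKT matrix at $(\boldsymbol\decision,\environment)$ --- is precisely the condition in Proposition~\ref{prop:implicit-function-theorem}: applying the implicit function theorem to the stacked KKT system in Equation~\ref{eqn:kkts} yields a neighborhood $U$ of $\environment$ on which the equilibrium near $\boldsymbol\decision$ is a single differentiable branch $\environment'\mapsto\boldsymbol\decision^*(\environment')$ with $\boldsymbol\decision^*(\environment)=\boldsymbol\decision$ (so this equilibrium is isolated), and whose derivative $\hat v:=\tfrac{d\boldsymbol\decision^*}{d\environment}$ is the unique solution of Equation~\ref{eqn:kkt_derivative_closed_form}.

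Next I would turn hypothesis~(2) into a local description of the oracle density. ``$\boldsymbol\decision$ is sampled with a fixed probability locally'' means the probability that $\mathcal{O}(\environment')$ returns the equilibrium lying on this branch equals a constant $q$ for all $\environment'\in U$; since the branch is isolated, the part of the oracle distribution attached to it is the rigidly translated point mass $p(\boldsymbol\decision',\environment')=q\,\delta(\boldsymbol\decision'-\boldsymbol\decision^*(\environment'))$ on a neighborhood of $(\boldsymbol\decision,\environment)$. This is exactly where the hypothesis is needed: if $q$ varied with $\environment'$, the right-hand side of Equation~\ref{eqn:conservation-law} would have to absorb an additional source term and $\hat v$ alone would not suffice. (To avoid literal point masses one can replace $\mathcal{O}$ by a mollified oracle carrying a narrow bump of total mass $q$ translated along the branch; this is the measure-zero situation already anticipated by the footnote to Theorem~\ref{thm:expectation-of-derivative}.)

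Then I would test Equation~\ref{eqn:conservation-law} against an arbitrary $\varphi\in C_c^\infty$ supported in that neighborhood and integrate by parts --- the same Stokes step as in the proof of Theorem~\ref{thm:expectation-of-derivative}, with vanishing boundary term --- to obtain
\[
\frac{d}{d\environment}\!\int\!\varphi(\boldsymbol\decision')\,p(\boldsymbol\decision',\environment)\,d\boldsymbol\decision'
=\int\!\nabla_{\boldsymbol\decision}\varphi(\boldsymbol\decision')\cdot v(\boldsymbol\decision',\environment)\,p(\boldsymbol\decision',\environment)\,d\boldsymbol\decision'.
\]
Substituting $p=q\,\delta(\cdot-\boldsymbol\decision^*(\environment))$, the left side equals $q\,\tfrac{d}{d\environment}\varphi(\boldsymbol\decision^*(\environment))=q\,\nabla_{\boldsymbol\decision}\varphi(\boldsymbol\decision)\cdot\hat v$ by the chain rule, and the right side equals $q\,\nabla_{\boldsymbol\decision}\varphi(\boldsymbol\decision)\cdot v(\boldsymbol\decision,\environment)$; since $\nabla_{\boldsymbol\decision}\varphi(\boldsymbol\decision)$ ranges over all vectors as $\varphi$ varies, this forces $v(\boldsymbol\decision,\environment)=\hat v$. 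Conversely, the same identity read backwards shows that taking $v:=\hat v$ does satisfy Equation~\ref{eqn:conservation-law} locally; because the weight $q$ is constant there is no source term, so this is the \emph{homogeneous} (pure-transport) solution, pinned down at the isolated point up to a field $w$ with $\nabla_{\boldsymbol\decision}\cdot(p\,w)=0$. Hence the solution of Equation~\ref{eqn:kkt_derivative_closed_form} is the equilibrium flow $v(\boldsymbol\decision,\environment)$ locally, and Theorem~\ref{thm:expectation-of-derivative} then certifies that the resulting gradient sample is unbiased in this regime.

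I expect the main obstacle to be the analytic bookkeeping around the measure-zero support of an isolated equilibrium: rigorously justifying the chain rule $\tfrac{d}{d\environment}\varphi(\boldsymbol\decision^*(\environment))=\nabla_{\boldsymbol\decision}\varphi\cdot\hat v$ (which needs differentiability of the branch, supplied by the IFT step), making precise that hypothesis~(2) genuinely isolates a constant-weight point mass transported rigidly along the branch, and confirming that the integration-by-parts boundary terms vanish on the chosen neighborhood. Stating carefully in what sense the equilibrium flow is determined only locally and only modulo a divergence-free correction --- and why $\hat v$ is the natural representative --- is the remaining subtlety; everything else (the linear-algebra identification of $\hat v$ with the solution of Equation~\ref{eqn:kkt_derivative_closed_form}, and the ``for all test functions'' cancellation) is routine.
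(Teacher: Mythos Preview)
Your proposal is correct and follows the same architecture as the paper's own proof: use the implicit function theorem (via the invertible KKT matrix) to extract a differentiable branch $\boldsymbol\decision^*(\environment')$ with derivative $\hat v$ solving Equation~\ref{eqn:kkt_derivative_closed_form}, interpret hypothesis~(2) as a constant-weight mass transported along that branch, and then verify Equation~\ref{eqn:conservation-law}. The only difference is in the last step: the paper writes $p(\boldsymbol\decision',\environment')=c\,\mathbf{1}_{\boldsymbol\decision'=h(\environment')}$ and checks the PDE by directly differentiating both sides distributionally (indicator $\to$ Dirac), whereas you pair the PDE against $C_c^\infty$ test functions and read off $v=\hat v$ from the chain rule. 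These are dual formulations of the same distributional identity; your weak-form route is cleaner about the measure-zero support (which the paper treats only formally), while the paper's direct computation makes the ``homogeneous solution'' phrasing more transparent --- but substantively the arguments coincide.
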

Theorem~\ref{thm:sufficient-conditions} ensures that when the sampled equilibrium behaves like a unique equilibrium locally, the solution to Equation~\ref{eqn:kkt_derivative_closed_form} matches the equilibrium flow of the sampled equilibrium.
In particular, Theorem~\ref{thm:sufficient-conditions} does not require all equilibria are isolated; it works as long as the sampled equilibrium satisfies the sufficient conditions. In contrast, the study in multiple equilibria requires global isolation for the analysis to work.
Together with Theorem~\ref{thm:expectation-of-derivative}, we can use the solution to Equation~\ref{eqn:kkt_derivative_closed_form} as an unbiased equilibrium gradient estimate and run stochastic gradient descent accordingly.


Lastly, when the sufficient conditions in Theorem~\ref{thm:sufficient-conditions} are not satisfied, e.g., the KKT matrix becomes singular for any non-isolated equilibrium, the solution to Equation~\ref{eqn:kkt_derivative_closed_form} does not match the equilibrium flow $v(\boldsymbol\decision, \environment)$. In this case, to compute the equilibrium flow correctly, we rely on solving the partial differential equation in Equation~\ref{eqn:conservation-law}.
If the probability density function $p(\boldsymbol\decision, \environment)$ is explicitly given, we can directly solve Equation~\ref{eqn:conservation-law} to derive the equilibrium flow.
If the probability density function $p(\boldsymbol\decision, \environment)$ is not given, we can use the empirical equilibrium distribution $p'(\boldsymbol\decision, \environment)$ constructed from the historical equilibrium samples of the oracle instead.

In practice, we hypothesize that even if the equilibria are not isolated and the corresponding KKT matrices are singular, solving degenerated version of Equation~\ref{eqn:kkt_derivative_closed_form} still serves as a good approximation to the equilibrium flow.
Therefore, we still use the solution to Equation~\ref{eqn:kkt_derivative_closed_form} as an approximate of the equilibrium flow in the following sections and algorithms.

\section{Gradient-Based Algorithm and Augmented Lagrangian Method}
To solve both the optimization problems in Definition~\ref{def:optimization-problem} and Definition~\ref{def:stochastic-problem}, we implement our algorithm with (i)~stochastic gradient descent with unbiased gradient access, and (ii)~augmented Lagrangian method to handle the equilibrium-dependent constraints.
We use the relaxation algorithm~\cite{uryas1994relaxation} as our equilibrium oracle~$\mathcal{O}$. The relaxation algorithm is a classic equilibrium finding algorithm that iteratively updates agents' strategies by best responding to other agents' strategies until convergence with guarantees~\cite{krawczyk2000relaxation}.


Since the leader's strategy~$\environment$ is constrained by the followers' response, we adopt an augmented Lagrangian method~\cite{bertsekas2014constrained} to convert the constrained problem to an unconstrained one with a Lagrangian objective.
We introduce a slack variable $\boldsymbol s \geq \boldsymbol 0$ to convert inequality constraints into equality constraints $\mathop{\mathbb{E}}_{\boldsymbol\decision^* \sim \mathcal{O}(\environment)} g(\boldsymbol\decision^*, \environment) + \boldsymbol s = \boldsymbol 0$.
Thus, the penalized Lagrangian can be written as:
\begin{align}\label{eqn:lagrangian}
    \mathcal{L}(\environment, \! \boldsymbol s; \! \boldsymbol\lambda) \! & = \! - \mathop{\mathbb{E}}\nolimits_{\boldsymbol\decision^* \! \sim \! \mathcal{O}(\environment)} \! f(\boldsymbol\decision^* \!,\! \environment) + \boldsymbol\lambda^\top \! (\mathop{\mathbb{E}}\nolimits_{\boldsymbol\decision^* \! \sim \! \mathcal{O}(\environment)} g(\boldsymbol\decision^* \!, \! \environment) \! + \! \boldsymbol s) \! \nonumber \\
    & \quad \quad + \frac{\mu}{2} \norm{\mathop{\mathbb{E}}\nolimits_{\boldsymbol\decision^* \! \sim \! \mathcal{O}(\environment)} g(\boldsymbol\decision^* \! , \! \environment) \! + \! \boldsymbol s}^2.
\end{align}
We run gradient descent on the minimization problem of the penalized Lagrangian $\mathcal{L}(\environment, \boldsymbol s; \boldsymbol\lambda)$ and update the Lagrangian multipliers $\boldsymbol\lambda$ every fixed number of iterations, 
as described in Algorithm~\ref{alg:constrained-differentiable-equilibrium}.
The stochastic Stackelberg problem with multiple followers can be solved by running stochastic gradient descent with augmented Lagrangian methods, where Theorem~\ref{thm:expectation-of-derivative} ensures the unbiasedness of the stochastic gradient estimate under the conditions in Theorem~\ref{thm:sufficient-conditions}.

\begin{algorithm}[t]
\caption{Augmented Lagrangian Method}
\label{alg:constrained-differentiable-equilibrium}
\textbf{Initialization:} $\environment = \environment_{\text{init}}$, learning rate $\gamma$, multipliers $\boldsymbol\lambda = \boldsymbol\lambda_0$, slack variable $\boldsymbol s \geq \boldsymbol 0, K=100$ \\
\For{\text{iteration in} $\{ 1,2,\dots \}$} {
Define the objective to be Lagrangian $\mathcal{L}(\environment, \boldsymbol s; \boldsymbol\lambda)$ defined in Equation~\ref{eqn:lagrangian} \\
Compute a sampled gradient of $\mathcal{L}$ by sampling $\boldsymbol\decision^* \sim \mathcal{O}(\environment)$. Compute $\frac{d \boldsymbol\decision^*}{d \environment}$ by Equation~\ref{eqn:kkt_derivative_closed_form} \\
Update $\environment = \environment - \gamma (\frac{\partial \mathcal{L}}{\partial \environment} + \frac{\partial \mathcal{L}}{\partial \boldsymbol\decision^*} \frac{d \boldsymbol\decision^*}{d \environment})$, $\boldsymbol s = \max\{\boldsymbol s - \gamma \frac{\partial \mathcal{L}}{\partial \boldsymbol s}, \boldsymbol 0\}$ \\
\If{iteration is a multiple of $K$}{
Update $\boldsymbol\lambda = \boldsymbol\lambda - \mu (g(\boldsymbol\decision^*, \environment) + \boldsymbol s)$
}
}
\textbf{Return:} leader's strategy $\environment$ 
\end{algorithm}


\section{Example Applications}\label{sec:example}
We briefly describe three different Stackelberg games with one leader and multiple self-interested followers.
Specifically, normal-form games with risk penalty has a unique Nash equilibrium, while other examples can have multiple.

\subsection{Coordination in Normal-Form Games}\label{sec:qre}
A normal-form game (NFG) is composed of $n$ follower players each with a payoff matrix $U_{i} \! \in \! \R^{m_1 \times \dots \times m_n}$ for all $i \! \in \! [n]$, where the $i$-th player has $m_i$ available pure strategies.
The set of all feasible mixed strategies of player~$i$ is $\decision_i \in \decisionset_i = \{ \decision \in [0,1]^{m_i} \mid \boldsymbol 1^\top \decision = 1 \}$.
On the other hand, the leader can offer non-negative subsidies $\environment_i \in \R_{\geq 0}^{m_1 \times \dots \times m_n}$ to each player~$i$ to reward specific combinations of pure strategies. The subsidy scheme is used to control the payoff matrix and incentivize the players to change their strategies.

Once the subsidy scheme $\environment$ is determined, each player $i$ chooses a strategy $\decision_i$ and receives the expected payoff $U_i(\boldsymbol\decision)$ and subsidy $\environment_i(\boldsymbol\decision)$, subtracting a penalty term $H(\decision_i) = \sum_j \decision_{ij} \log \decision_{ij}$, the Gibbs entropy of the chosen strategy $\boldsymbol\decision_i$ to represent the risk aversion of player $i$.
Since the followers' objectives are concave, the risk aversion model yields a unique Nash equilibrium, which is known to be quantal response equilibrium (QRE)~\cite{mckelvey1995quantal,ling2018game}.
Lastly, the leader's payoff is given by the social welfare across all players, which is the summation of the expected payoffs without subsidies: $\sum\nolimits_{i \in [n]} U_{i}(\boldsymbol\decision)$. The subsidy scheme is subject to a budget constraint $B$ on the total subsidy paid to all players.



\subsection{Security Games with Multiple Defenders}\label{sec:ssg}
Stackelberg security games (SSGs) model a defender protecting a set of targets $T$ from being attacked.
We consider a scenario with a leader coordinator and $n$ non-cooperative follower defenders each patrolling a subset $T_i \subseteq T$ of the targets~\cite{gan2018stackelberg}.
Each defender~$i$ can determine the patrol effort spent on protecting the designated targets. We use $0 \leq \decision_{i,t} \leq 1$ to denote the effort spent on target $t \in T_i$ and the total effort is upper bounded by $b_i$. 
Defender~$i$ only receives a penalty $U_{i,t} < 0$ when target $t \in T_{i}$ in her protected region is attacked but unprotected by all defenders, and $0$ otherwise.

Because the defenders are independent, the patrol strategies $\boldsymbol\decision$ can overlap, leading to a multiplicative unprotected probability $\prod\nolimits_{i} (1 - \decision_{i,t})$ of target~$t$. Given the unprotected probabilities, attacks occur under a distribution $\boldsymbol p \in \R^{|T|}$, where the distribution $\boldsymbol p$ is a function of the unprotected probabilities defined by a quantal response model.
To encourage collaboration, the leader coordinator can selectively provide reimbursement $\environment_{i,t} \geq 0$ to alleviate defender~$i$'s loss when target~$t$ is attacked but unprotected, which encourages the defender to focus on protecting specific regions, reducing wasted effort on overlapping patrols. The leader's goal is to protect all targets, where the leader's objective is the total return across over all targets $\sum\nolimits_{t \in T} U_{t} p_t \prod\nolimits_{i} (1 - \decision_{i,t})$.
Lastly, the reimbursement scheme $\environment$ must satisfy a budget constraint $B$ on the total paid reimbursement.



\subsection{Cyber Insurance Games With Multiple Customers}\label{sec:cyber}
We adopt the cyber insurance model proposed by Naghizadeh et al.~\shortcite{naghizadeh2014voluntary} and Johnson et al.~\shortcite{johnson2011security} to study how agents in an interconnected cyber security network make decisions, where agents' decisions jointly affect each other's risk. There are $n$ agents (followers) facing malicious cyberattacks. Each agent $i$ can deploy an effort of protection $\decision_i \in \R_{\geq 0}$ to his computer system, where investing in protection incurs a linear cost $c_i \decision_i$. Given the efforts $\boldsymbol\decision$ spent by all the agents, the joint protection of agent $i$ is $\sum\nolimits_{j=1}^n w_{ij} \decision_j$ with an interconnected effect parameterized by weights $W = \{w_{ij}\}_{i, j \in [n]}$. The probability of being attacked is modeled by $\sigma(- \sum\nolimits_{j=1}^n w_{ij} \decision_j + L_i)$, where $\sigma$ is the sigmoid function and $L_i$ refers to the value of agent $i$.

The Stackelberg leader is an external insurer who can customize insurance plans to influence agents' protection decisions. The leader can set an insurance plan  $\environment = \{I_i, \rho_i\}_{i \in [n]}$ to agent~$i$, where $\rho_i$ is the premium paid by agent~$i$ to receive compensation $I_i$ when attacked. Under the insurance plans and the interconnected effect, agents selfishly determine their effort spent on the protection $\boldsymbol\decision$ to maximize their payoff. On the other hand, the leader's objective is the total premium subtracting the compensation paid, while the constraints on the feasible insurance plans are the individual rationality of each customer, i.e., the compensation and premium must incentivize agents to purchase the insurance plan by making the payoff with insurance no worse than the payoff without. These constraints restrict the premium and compensation offered by the insurer.




\begin{figure*}[t]
    \centering
    \begin{subfigure}{0.325\linewidth}
        \centering
        \includegraphics[width=\textwidth]{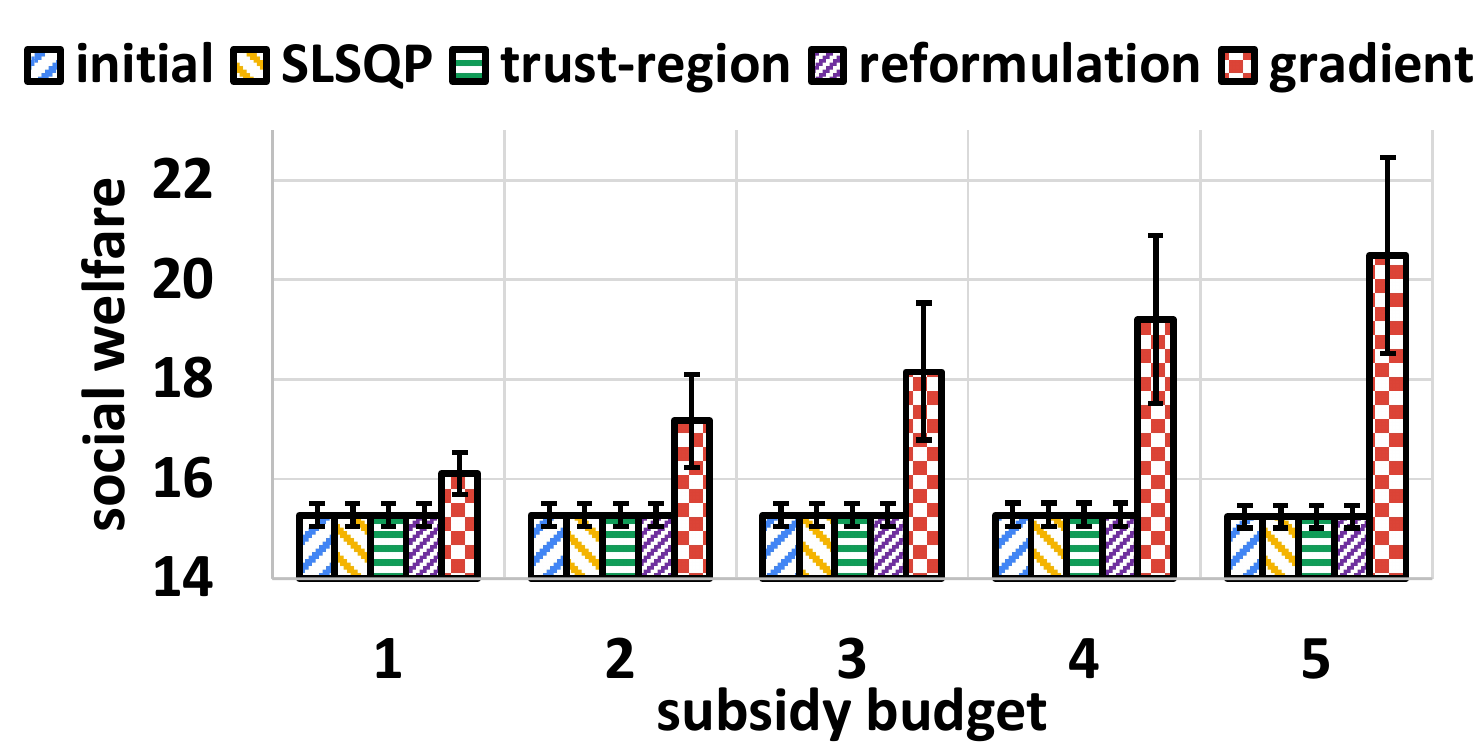}
        \caption{Normal-form games}
        \label{fig:qre}
    \end{subfigure}
    \hfill
    \begin{subfigure}{0.325\linewidth}
        \centering
        \includegraphics[width=\textwidth]{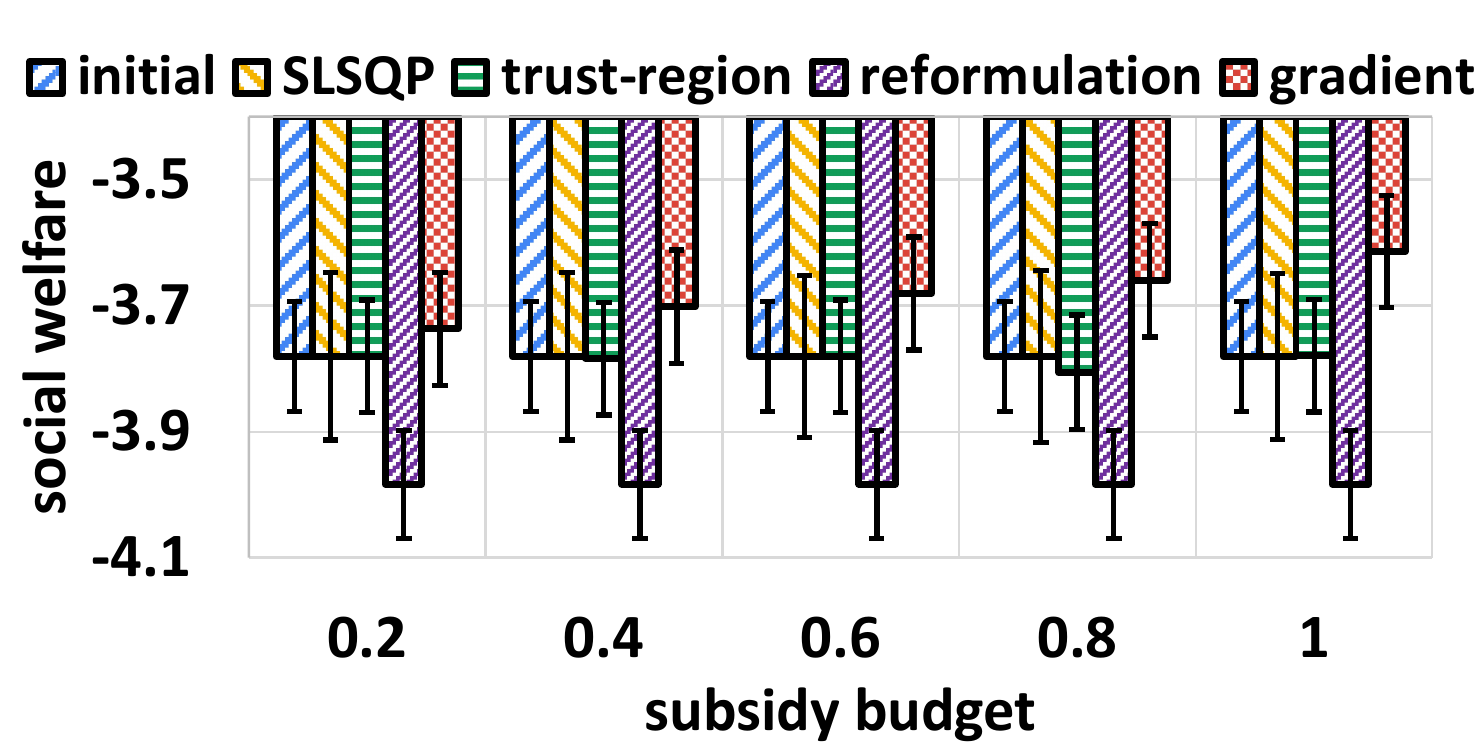}
        \caption{Multi-defender SSGs}
        \label{fig:ssg}
    \end{subfigure}
    \hfill
    \begin{subfigure}{0.325\linewidth}
        \centering
        \includegraphics[width=\linewidth]{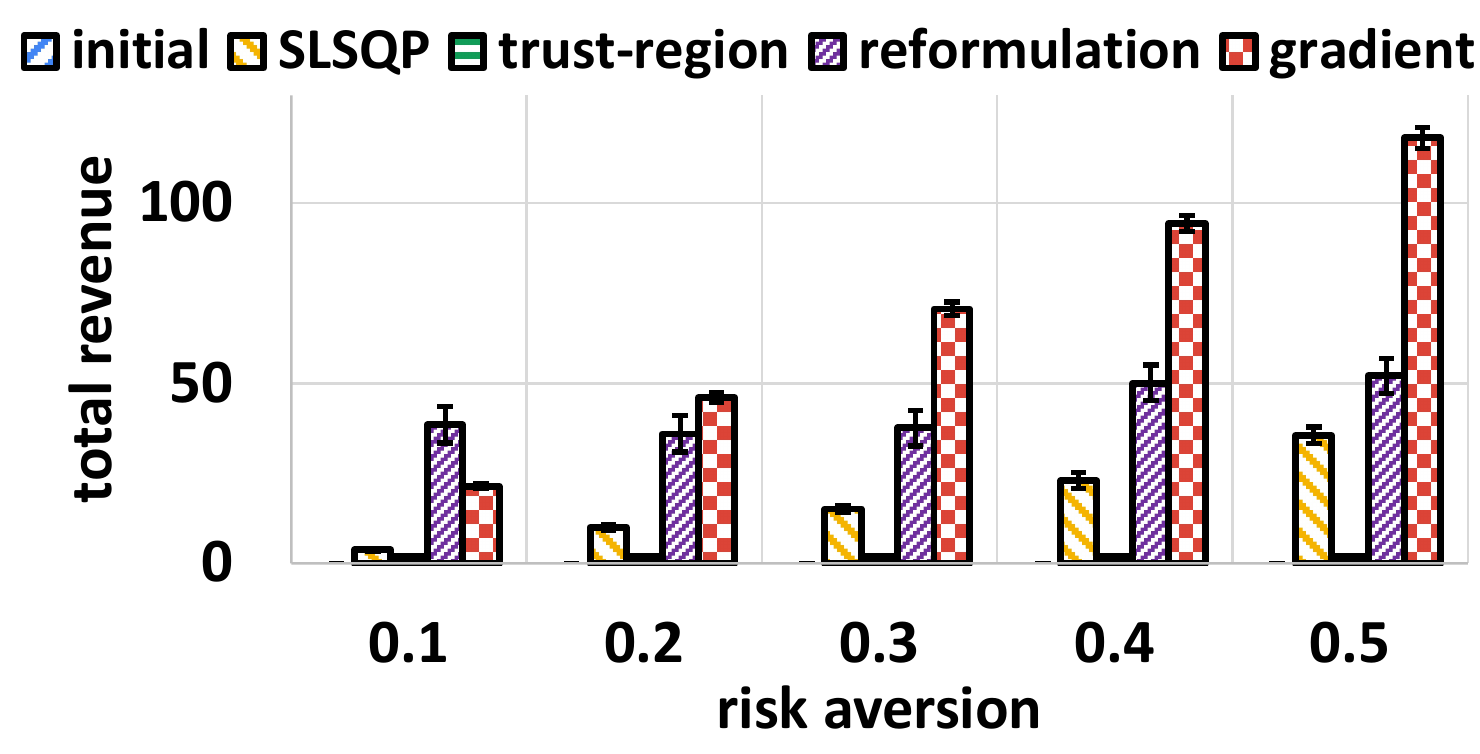}
        \vfill
        \caption{Cyber insurance games}
        \label{fig:cyber}
    \end{subfigure}
    \caption{We plot the solution quality of the Stackelberg problems with multiple followers. In all three domains, our gradient-based method achieves significantly higher objective than all other approaches. In NFGs and SSGs, the baselines cannot meaningfully improve upon the default strategy of the leader's initialization due to the high dimensionality of the parameter $\environment$; in cyber insurance games, SLSQP and reformulation both make some progress but still mostly with lower utility.}
    \label{fig:quality}
\end{figure*}
\begin{figure*}[t]
    \centering
    \begin{subfigure}{0.325\linewidth}
        \centering
        \includegraphics[width=\textwidth]{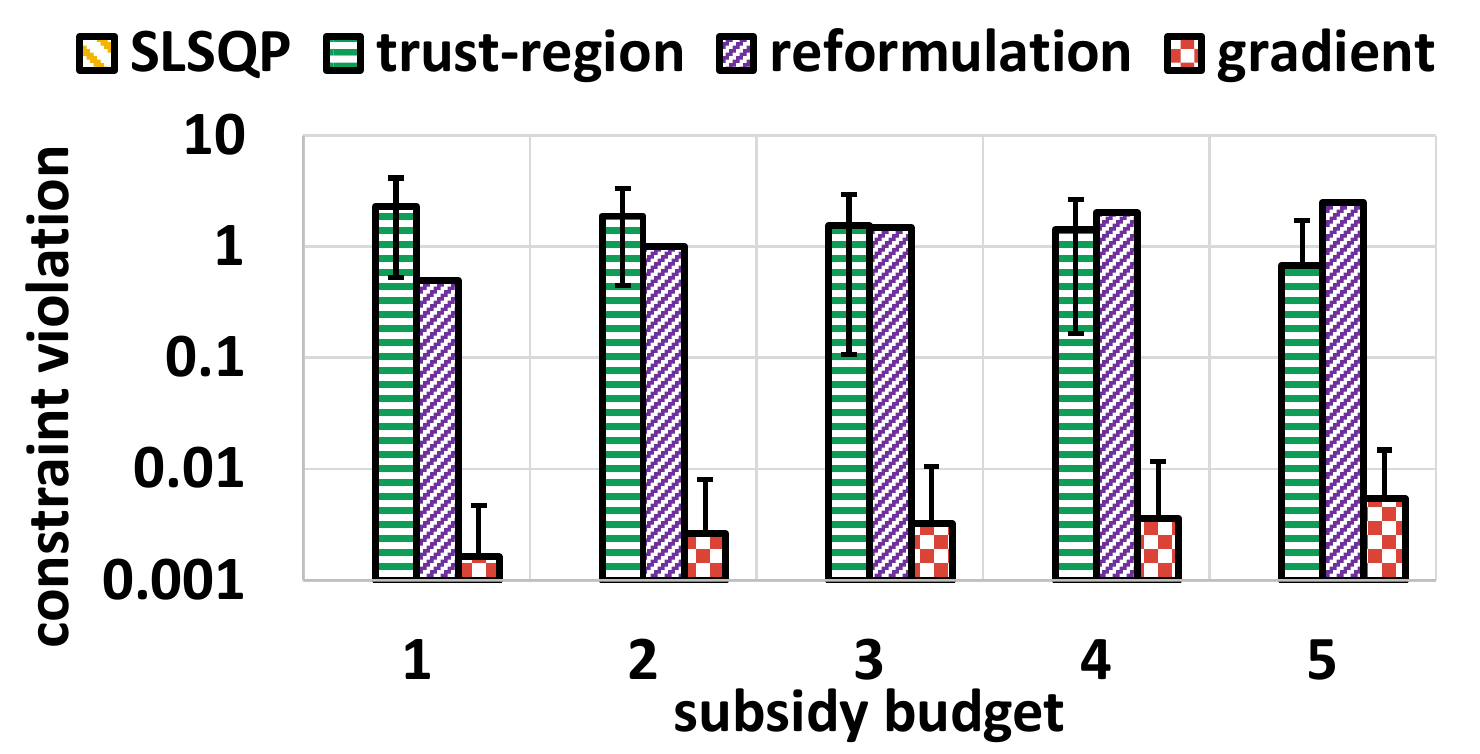}
        \caption{Normal-form games}
        \label{fig:qre-violation}
    \end{subfigure}
    \hfill
    \begin{subfigure}{0.325\linewidth}
        \centering
        \includegraphics[width=\textwidth]{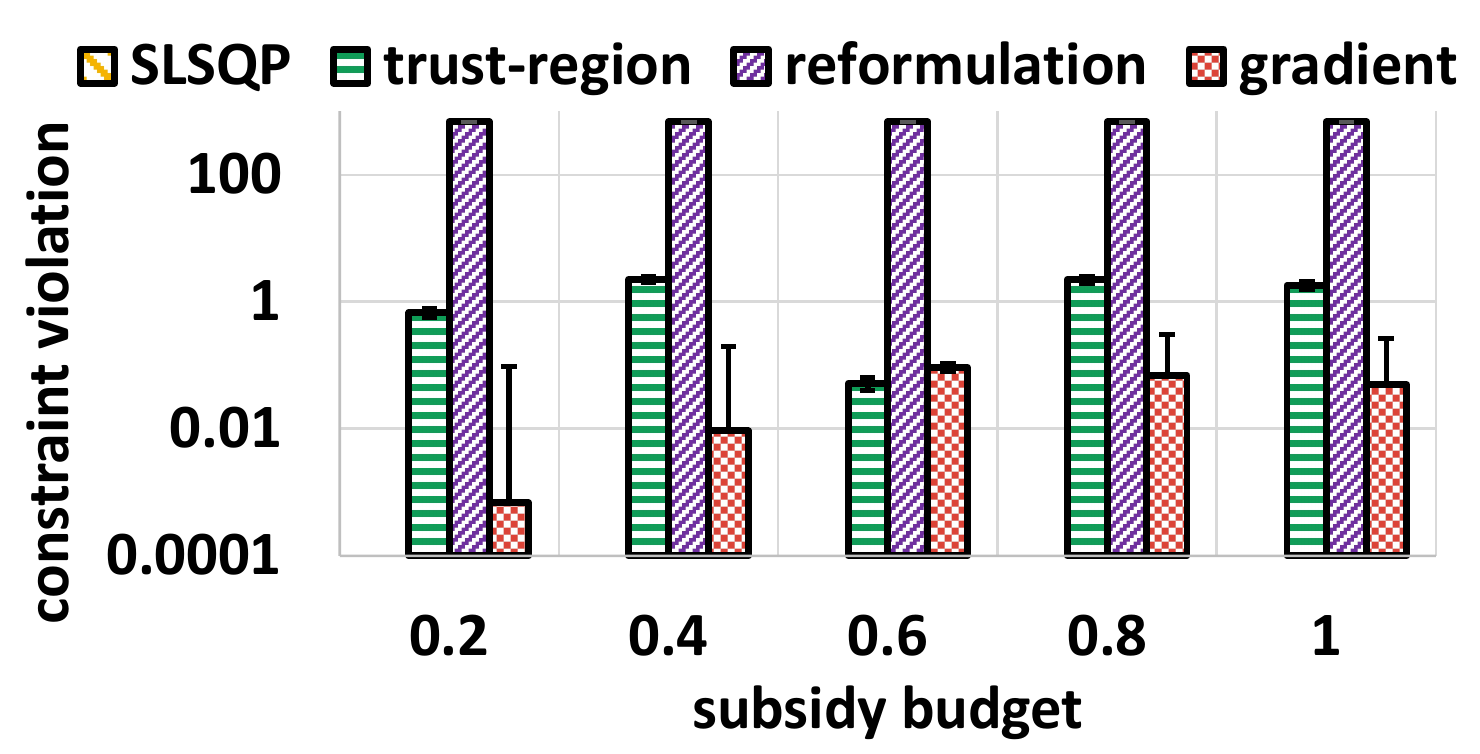}
        \caption{Multi-defender SSGs}
        \label{fig:ssg-violation}
    \end{subfigure}
    \hfill
    \begin{subfigure}{0.325\linewidth}
        \centering
        \includegraphics[width=\textwidth]{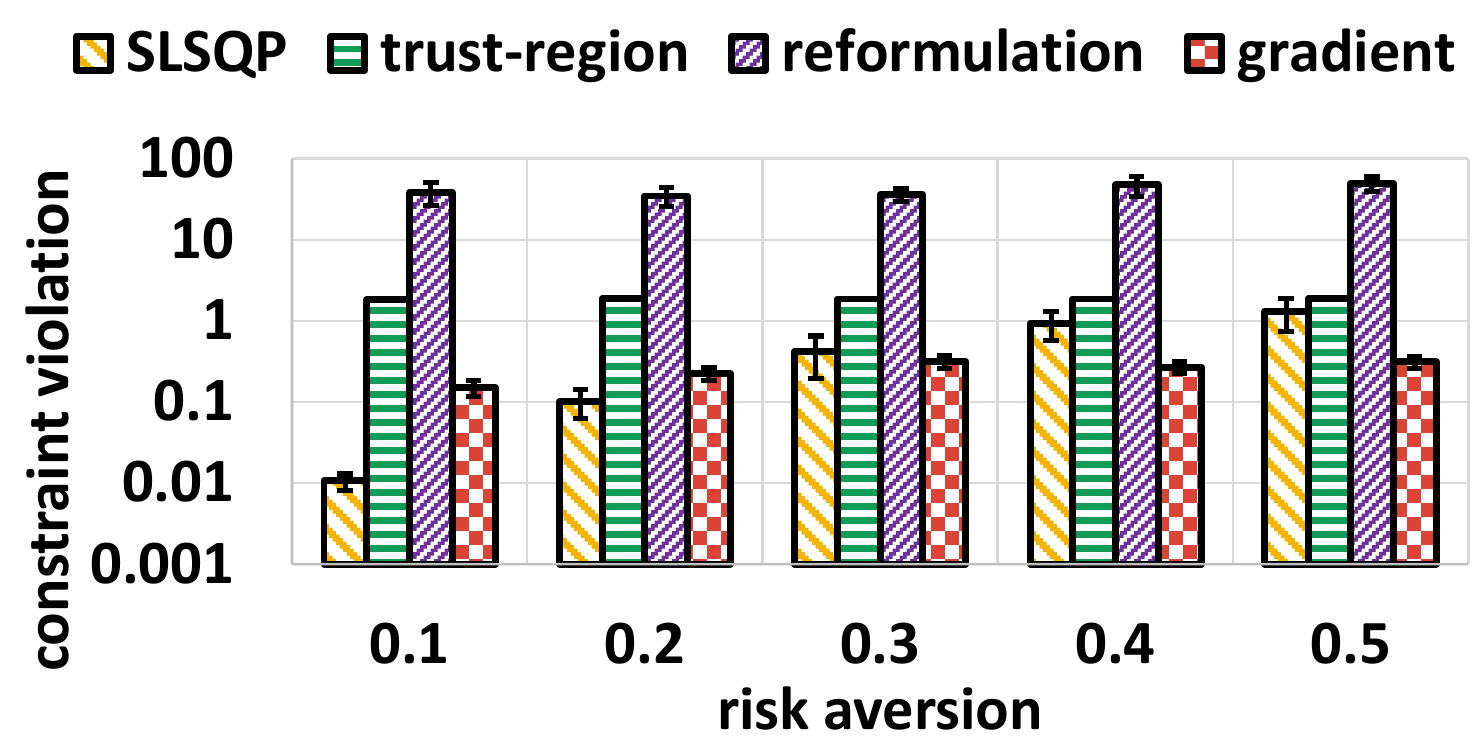}
        \caption{Cyber insurance games}
        \label{fig:cyber-violation}
    \end{subfigure}
    \caption{We plot the average budget constraint violation. Our gradient-based approach maintains low violation across all settings. SLSQP produces no violation in the first two domains because it fails to provide any meaningful improvement against the leader's initialization. Other baselines violate constraints more (often by orders of magnitude) despite less performance improvement.}
    \label{fig:constraint-violation}
\end{figure*}

\section{Experiments and Discussion}
We compare our gradient-based Algorithm~\ref{alg:constrained-differentiable-equilibrium} ({\bf gradient}) against various baselines in the three settings described above.
In each experiment, we execute $30$ independent runs ($100$ runs for SSGs) under different randomly generated instances. We run Algorithm~\ref{alg:constrained-differentiable-equilibrium} with learning rate $\gamma = 0.01$ for 5,000 gradient steps and update the Lagrange multipliers every $K \! = \! 100$ iterations. Our gradient-based method completes in about an hour across all settings---refer to the appendix for more details. 

\paragraph{Baselines}
We compare against several baselines that can solve the stochastic Stackelberg problem with multiple followers with equilibrium-dependent objective and constraints. In particular, given the non-convexity of agents' objective functions, SSGs and cyber insurance games can have multiple, stochastic equilibria. Our first baseline is the leader's {\bf initial} strategy $\environment_0$, which is a naive all-zero strategy in all three settings.
Blackbox optimization baselines include sequential least squares programming ({\bf SLSQP})~\cite{kraft1988software} and the {\bf trust-region} method~\cite{conn2000trust}, where the equilibrium encoded in the optimization problem is treated as a blackbox that needs to be repeatedly queried.
{\bf Reformulation}-based algorithm ~\cite{basilico2020bilevel,aussel2020trilevel} is the state-of-the-art method to solve Stackelberg games with multiple followers. This approach reformulates the followers' equilibrium conditions into non-linear complementary constraints as a mathematical program with equilibrium constraints~\cite{luo1996mathematical}, then solves the problem using branch-and-bound and mixed integer non-linear programming (we use a commercial solver, Knitro~\cite{nocedal2006knitro}). The reformulation-based approach cannot handle arbitrary stochastic equilibria but can handle optimistic or pessimistic equilibria. We implement the optimistic version of the reformulation as our baseline, which could potentially suffer from a performance drop as exemplified in Theorem~\ref{thm:stochastic-eq}. 


\subsection{Solution Quality}
In Figure~\ref{fig:qre} and~\ref{fig:ssg}, we plot the leader's objective ($y$-axis) versus various budgets for the paid subsidy ($x$-axis). Figure~\ref{fig:cyber}, shows the total revenue to the insurer ($y$-axis) versus the risk aversion of agents ($x$-axis).
Denoting the number of agents by $n$ and the number of actions per agent by $m$, we have $n=3,5,10$ and $m=10,50,1$ in NFGs, SSGs, and cyber insurance games, respectively.

Our optimization baselines perform poorly in Figure~\ref{fig:qre} and~\ref{fig:ssg} due to the high dimensionality of the environment parameter $\environment$ in NFGs ($\dim(\environment) = n m^n$) and SSGs ($\dim(\environment) = nm$), respectively.
In Figure~\ref{fig:cyber}, the dimensionality of cyber insurance games ($\dim(\environment) = 2n$) is smaller, where we can see that SLSQP and reformulation-based approaches start making some progress, but still less than our gradient-based approach.
The main reason that blackbox methods do not work is due to the expensive computation of numerical gradient estimates. On the other hand, reformulation method fails to handle the mixed-integer non-linear programming problem reformulated from followers' best response and the constraints within a day.



\subsection{Constraint Violation}
In Figure \ref{fig:constraint-violation}, we provide the average constraint violation across different settings. Blackbox optimization algorithms either become stuck at the initial point due to the inexact numerical gradient estimate or create large constraint violations due to the complexity of equilibrium-dependent constraints. The reformulation approach also creates large constraint violations due to the difficulty of handling large number of non-convex followers' constraints under high-dimensional leader's strategy.
In comparison, our method can handle equilibrium-dependent constraints by using an augmented Lagrangian method with an ability to tighten the budget constraint violation under a tolerance as shown.
Although Figure~\ref{fig:constraint-violation} only plots the budget constraint violation, in our algorithm, we enforce that the equilibrium oracle runs until the equilibrium constraint violation is within a small tolerance $10^{-6}$, whereas other algorithms sometimes fail to satisfy such equilibrium constraints.

\section{Conclusion}
In this paper, we present a gradient-based approach to solve Stackelberg games with multiple followers by differentiating through followers' equilibrium to update the leader's strategy. Our approach generalizes to stochastic gradient descent when the equilibrium reached by followers is stochastically chosen from multiple equilibria. We establish the unbiasedness of the stochastic gradient by the equilibrium flow derived from a partial differential equation. 
To our knowledge, this work is the first to establish the unbiasedness of gradient computed from stochastic sample of multiple equilibria.
Empirically, we implement our gradient-based algorithm on three different examples, where our method outperforms existing optimization and reformulation baselines.


\section*{Acknowledgement}
This research was supported by MURI Grant Number W911NF-17-1-0370.
The computations in this paper were run on the FASRC Cannon cluster supported by the FAS Division of Science Research Computing Group at Harvard University.

\bibliography{reference}

\cleardoublepage
\newpage
\appendix
\section{Implementation Details}
We implement a differentiable PyTorch module to compute a sample of the followers' equilibria. The module takes the leader's strategy as input and outputs a Nash equilibrium computed in the forward pass using the relaxation algorithm. We use a random initialization to run the relaxation algorithm, which can reach to different equilibria depending on different initialization. Given the sampled equilibrium $\boldsymbol\decision^*$ computed in the forward pass, the backward pass is implemented by PyTorch autograd to compute all the second-order derivatives to express Equation~\ref{eqn:kkt_derivative_closed_form}.
The backward pass solves the linear system in Equation~\ref{eqn:kkt_derivative_closed_form} analytically to derive $\frac{d \boldsymbol\decision^*}{d \environment}$ as an approximate of the equilibrium flow.

This PyTorch module is used in all three examples in our experiment. The implementation is flexible as we just need to adjust the followers' objectives and constraints, the relaxation algorithm and the gradient computation all directly apply.

\section{Proofs of Theorem~\ref{thm:expectation-of-derivative} and Theorem~\ref{thm:sufficient-conditions}}\label{sec:missing-proofs}

\derivativeExpectation*
\begin{proof}
To compute the derivative on the left-hand side, we have to first expand the expectation because the equilibrium distribution is dependent on the environment parameter $\environment$:
\begin{align}
    & \frac{d}{d \environment} \mathop{\mathbb{E}}_{\boldsymbol\decision \sim \mathcal{O}(\environment)} f(\boldsymbol\decision, \environment) \nonumber \\ =& \frac{d}{d \environment}  \int_{\boldsymbol\decision \in \decisionset}   f(\boldsymbol\decision, \environment) p(\boldsymbol\decision, \environment) d \boldsymbol\decision \nonumber \\
    =& \int_{\boldsymbol\decision \in \decisionset} \left( p(\boldsymbol\decision, \environment) \frac{\partial}{\partial \environment} f(\boldsymbol\decision, \environment) + f(\boldsymbol\decision, \environment) \frac{\partial}{\partial \environment} p(\boldsymbol\decision, \environment) \right) d \boldsymbol\decision \nonumber \\
    =& \mathop{\mathbb{E}}_{\boldsymbol\decision \sim \mathcal{O}(\environment)} f_\environment(\boldsymbol\decision, \environment) + \int_{\boldsymbol\decision \in \decisionset} f(\boldsymbol\decision, \environment) \frac{\partial}{\partial \environment} p(\boldsymbol\decision, \environment) d \boldsymbol\decision \label{eqn:differentiate-through-multiplication-appendix}
\end{align}

We further define $\Phi(\boldsymbol\decision, \environment) = p(\boldsymbol\decision, \environment) v(\boldsymbol\decision, \environment)$. By the equilibrium flow definition in Equation~\ref{eqn:conservation-law}, we have
\begin{align}
\frac{\partial}{\partial \environment} p(\boldsymbol\decision, \environment) = - \nabla_{\boldsymbol\decision} \cdot \Phi(\boldsymbol\decision, \environment) \nonumber
\end{align}

Therefore, the later term in Equation~\ref{eqn:differentiate-through-multiplication-appendix} can be computed by integration by parts and Stokes' theorem:
\begin{align}
    &  \int_{\boldsymbol\decision \in \decisionset} f(\boldsymbol\decision, \environment) \frac{\partial}{\partial \environment} p(\boldsymbol\decision, \environment) d \boldsymbol\decision \nonumber \\
    =& - \int_{\decision \in \decisionset} f(\boldsymbol\decision, \environment) \nabla_{\boldsymbol\decision} \cdot \Phi(\boldsymbol\decision, \environment) d \boldsymbol\decision \nonumber \\
    =& - \int_{\decision \in \decisionset} \nabla_{\boldsymbol\decision} \cdot (f(\boldsymbol\decision, \environment) \Phi(\boldsymbol\decision, \environment)) d \boldsymbol\decision \nonumber \\
    & \quad \quad \quad \quad \quad \quad \quad \quad + \int_{\boldsymbol\decision \in \decisionset} f_{\boldsymbol\decision}(\boldsymbol\decision, \environment) \Phi(\boldsymbol\decision, \environment) d \boldsymbol\decision \nonumber \\
    =& - \oint_{\partial \decisionset} f(\boldsymbol\decision, \environment) \Phi(\boldsymbol\decision, \environment) dS + \int_{\boldsymbol\decision \in \decisionset} f_{\boldsymbol\decision}(\boldsymbol\decision, \environment) \Phi(\boldsymbol\decision, \environment) d \boldsymbol\decision \nonumber
\end{align}

Therefore, we have
\begin{align}
    & \frac{d}{d \environment} \mathop{\mathbb{E}}_{\boldsymbol\decision \sim \mathcal{O}(\environment)} f(\boldsymbol\decision, \environment) \nonumber \\
    =& \mathop{\mathbb{E}}_{\boldsymbol\decision \sim \mathcal{O}(\environment)} f_\environment(\boldsymbol\decision, \environment) + \int_{\boldsymbol\decision \in \decisionset} f(\boldsymbol\decision, \environment) \frac{\partial}{\partial \environment} p(\boldsymbol\decision, \environment) d \boldsymbol\decision \nonumber \\
    =& \mathop{\mathbb{E}}_{\boldsymbol\decision \sim \mathcal{O}(\environment)} f_\environment(\boldsymbol\decision, \environment) - \oint_{\partial \decisionset} f(\boldsymbol\decision, \environment) \Phi(\boldsymbol\decision, \environment) dS \nonumber \\
    & \quad \quad + \int_{\boldsymbol\decision \in \decisionset} f_{\boldsymbol\decision}(\boldsymbol\decision, \environment) \Phi(\boldsymbol\decision, \environment) d \boldsymbol\decision \nonumber \\
    =& \mathop{\mathbb{E}}_{\boldsymbol\decision \sim \mathcal{O}(\environment)} f_\environment(\boldsymbol\decision, \environment) - \oint_{\partial \decisionset} f(\boldsymbol\decision, \environment) p(\boldsymbol\decision, \environment) v(\boldsymbol\decision, \environment) dS \nonumber \\
    & \quad \quad + \int_{\boldsymbol\decision \in \decisionset} f_{\boldsymbol\decision}(\boldsymbol\decision, \environment) p(\boldsymbol\decision, \environment) v(\boldsymbol\decision, \environment) d \boldsymbol\decision \nonumber \\
    =& \mathop{\mathbb{E}}_{\boldsymbol\decision \sim \mathcal{O}(\environment)} \left[ f_\environment(\boldsymbol\decision, \environment) + f_{\boldsymbol\decision}(\boldsymbol\decision, \environment) v(\boldsymbol\decision, \environment) \right] \nonumber
\end{align}
where the term $\oint_{\partial \decisionset} f(\boldsymbol\decision, \environment) p(\boldsymbol\decision, \environment) v(\boldsymbol\decision, \environment) dS = 0$ because $p(\boldsymbol\decision, \environment) = 0$ at the boundary $\partial \decisionset$. This concludes the proof of Theorem~\ref{thm:expectation-of-derivative}.
\end{proof}

Notice that in the proof of Theorem~\ref{thm:expectation-of-derivative}, we only use integration by parts and Stokes' theorem, where both of them apply to Riemann integral and Lebesgue integral. Thus, the proof of Theorem~\ref{thm:expectation-of-derivative} also works for any measure zero jumps in the probability density function.

\sufficientConditions*
\begin{proof}
Since the KKT conditions hold for all equilibria, the given $\environment$ and $\boldsymbol\decision$ must satisfy $KKT(\boldsymbol\decision, \environment) = 0$. The KKT matrix in Equation~\ref{eqn:kkt_derivative_closed_form} is given by $\frac{\partial KKT}{\partial \boldsymbol\decision}$, the Jacobian matrix of the function $KKT(\boldsymbol\decision, \environment)$ with respect to $\boldsymbol\decision$. If the KKT matrix is invertible, by implicit function theorem, there exists an open set $U$ containing $\environment$ such that there exists a unique continuously differentiable function $h: U \rightarrow \decisionset$ such that $h(\environment) = \boldsymbol\decision$ and $KKT(h(\environment'), \environment') = 0$ for all $\environment' \in U$.
Moreover, the analysis in Equation~\ref{eqn:kkt_derivative_closed_form} applies, where $\frac{d h(\environment)}{d \environment} = \frac{d \boldsymbol\decision}{d \environment}$ matches the solution of Equation~\ref{eqn:kkt_derivative_closed_form}.

Lastly, the condition that the equilibrium $\boldsymbol\decision$ is sampled with a fixed probability density $c$ locally implies the corresponding probability density function must satisfy $p(\decision', \environment') = c \mathbf{1}_{\text{KKT}(\decision', \environment') = 0} = c \mathbf{1}_{\boldsymbol\decision' = h(\environment')}$ for all $\environment' \in U$ in an open set locally\footnote{We can choose the smaller subset $U$ such that both the implicit function theorem and the locally fixed probability $c$ both hold.}.

Now we can verify whether $p(\boldsymbol\decision', \environment')$ and $v(\boldsymbol\decision', \environment') = \frac{d h(\environment')}{d \environment}$ (independent of $\boldsymbol\decision'$) satisfy the partial differential equation of equilibrium flow as defined in Definition~\ref{def:conservation-law}.
We first compute the left-hand side of Equation~\ref{eqn:conservation-law} by:
\begin{align}
    \frac{\partial}{\partial \environment} p(\boldsymbol\decision', \environment') &= \frac{\partial}{\partial \environment} c \mathbf{1}_{\boldsymbol\decision' = h(\environment')} \nonumber \\ 
    &= c \delta_{\boldsymbol\decision' = h(\environment')} \frac{d h(\environment')}{d \environment} \label{eqn:differential-equation-lhs}
\end{align}
where Equation~\ref{eqn:differential-equation-lhs} is derived by fixing $\boldsymbol\decision'$, the derivative of a jump function $\mathbf{1}_{\boldsymbol\decision' = h(\environment')}$ is a Dirac delta function located at $\boldsymbol\decision' = h(\environment')$ multiplied by a Jacobian term $\frac{d h(\environment')}{d \environment}$.

We can also compute the right-hand side of Equation~\ref{eqn:conservation-law} by:
\begin{align}
    & \nabla_\decision \cdot (p(\boldsymbol\decision', \environment') v(\boldsymbol\decision', \environment')) \nonumber \\
    =& v(\boldsymbol\decision', \environment') \frac{\partial}{\partial \boldsymbol\decision} p(\boldsymbol\decision', \environment') + p(\boldsymbol\decision', \environment') \frac{\partial}{\partial \boldsymbol\decision} v(\boldsymbol\decision', \environment') \label{eqn:differential-equation-chain-rule} \\
    =& \frac{d h(\environment')}{d \environment} \frac{\partial}{\partial \boldsymbol\decision} c\boldsymbol1_{\boldsymbol\decision'=h(\environment')} \nonumber \\
    =& c \delta_{\boldsymbol\decision' = h(\environment')} \frac{d h(\environment')}{d \environment} \label{eqn:differential-equation-rhs}
\end{align}
where the second term in Equation~\ref{eqn:differential-equation-chain-rule} is $0$ because we define $v(\boldsymbol\decision', \environment') = \frac{d h(\environment')}{d \environment}$, which is independent of $\boldsymbol\decision'$.
Equation~\ref{eqn:differential-equation-rhs} is derived by fixing $\environment'$, the derivative of a jump function is a Dirac delta function located at $\boldsymbol\decision' = \environment'$.

The above calculation shows that Equation~\ref{eqn:differential-equation-lhs} is identical to Equation~\ref{eqn:differential-equation-rhs}, which implies the left-hand side and the right-hand side of Equation~\ref{eqn:conservation-law} are equal.
Therefore, we conclude that the choice of $v(\boldsymbol\decision', \environment') = \frac{d \boldsymbol\decision'}{d \environment} = \frac{d h(\environment')}{d \environment}$ is a homogeneous solution to differential equation in Equation~\ref{eqn:conservation-law} locally in $\environment' \in U$. 
By the definition of the equilibrium flow, $v(\boldsymbol\decision', \environment') = \frac{d \boldsymbol\decision'}{d \environment}$ is a solution to the equilibrium flow because we can subtract the homogeneous solution and define a new partial differential equation without region $U$ to compute the solution outside of $U$.
\end{proof}

\begin{figure*}[t]
    \centering
    \begin{subfigure}{0.32\linewidth}
        \includegraphics[width=\textwidth]{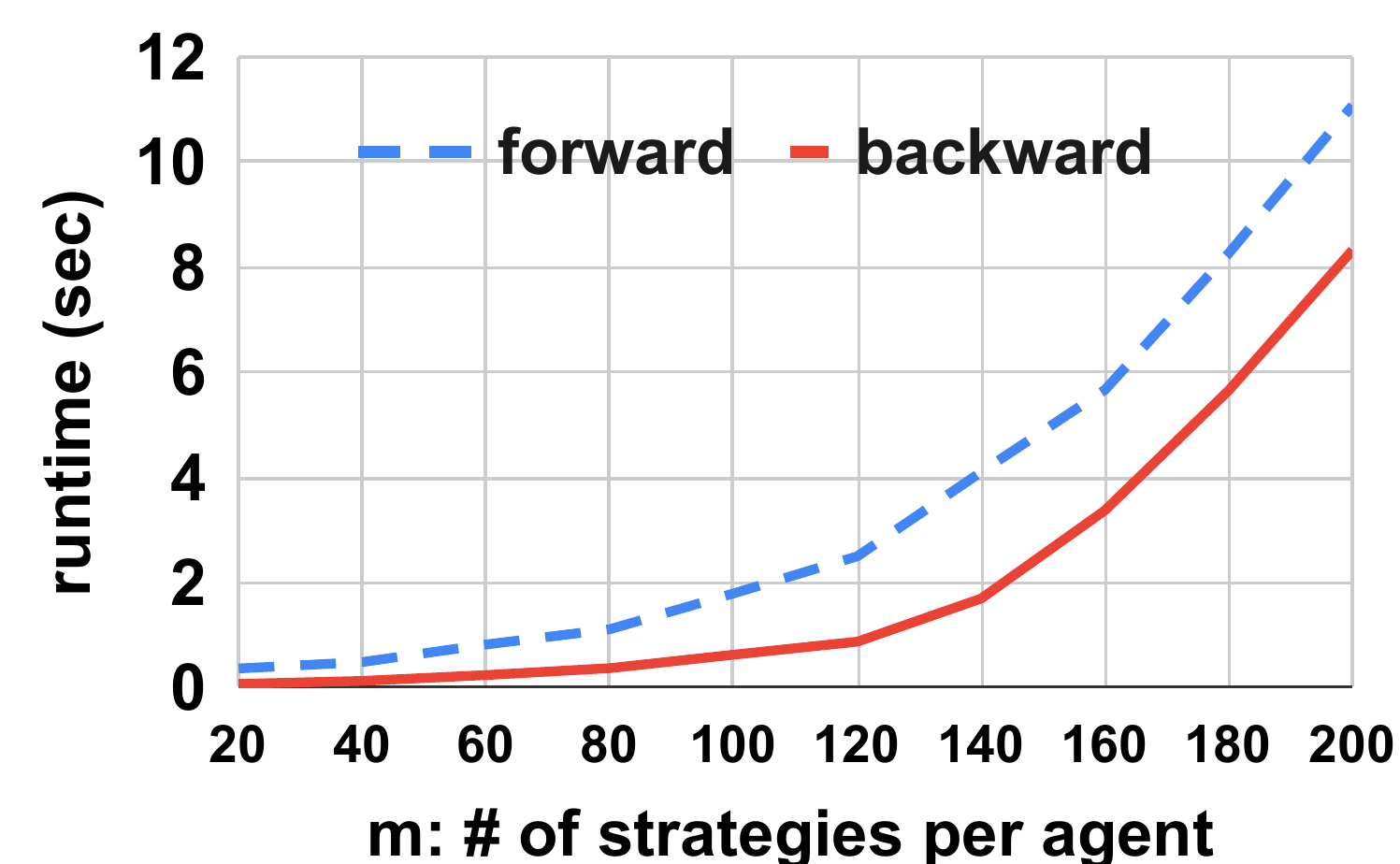}
        \caption{Normal-form games with $n=3$ followers and varied $m$ strategies per follower.}
        \centering
        \label{fig:qre-runtime}
    \end{subfigure}
    \hfill
    \begin{subfigure}{0.32\linewidth}
        \includegraphics[width=\textwidth]{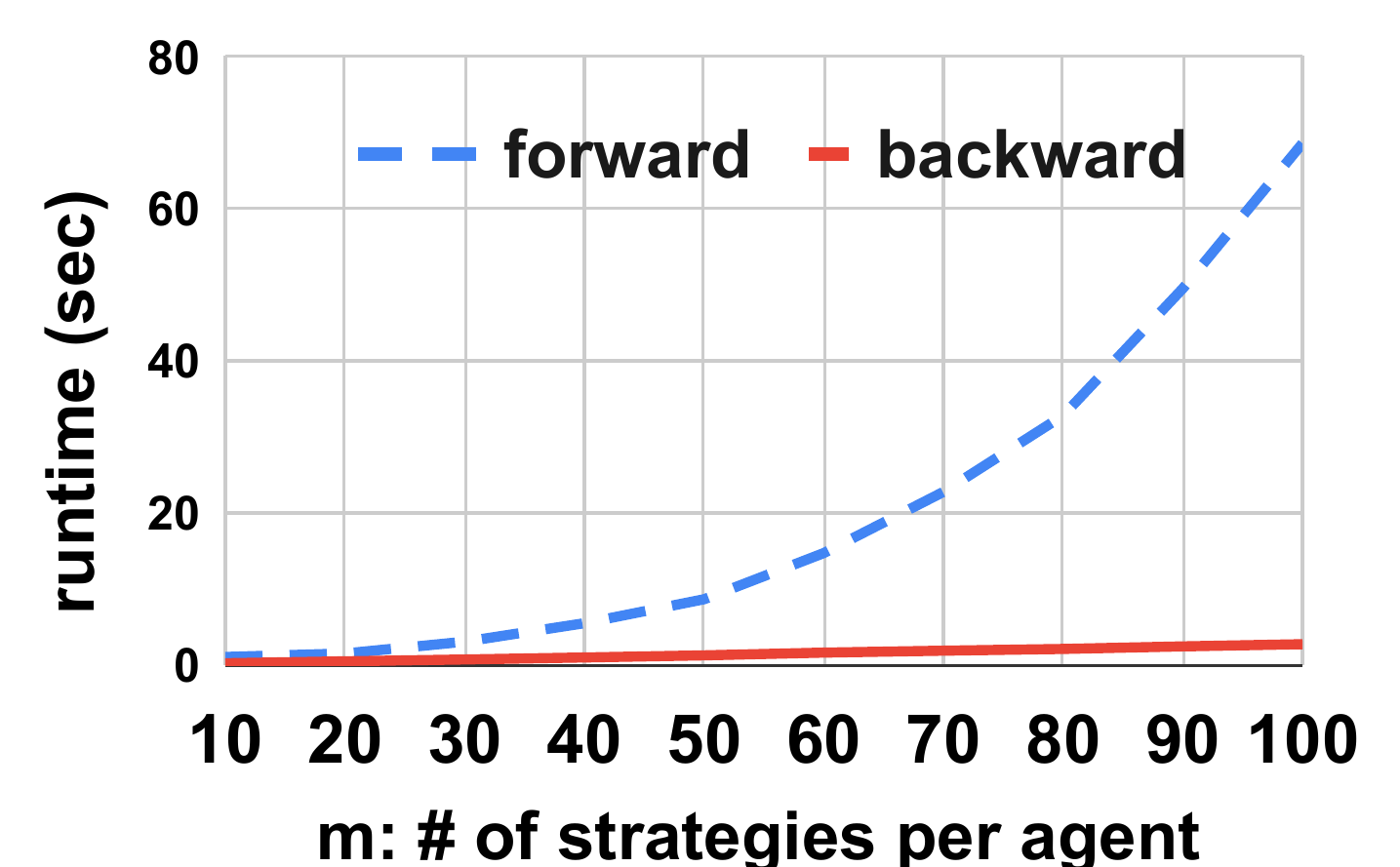}
        \caption{Stackelberg security games with $n=5$ followers and varied $m$ strategies per follower.}
        \centering
        \label{fig:ssg-runtime}
    \end{subfigure}
    \hfill
    \begin{subfigure}{0.32\linewidth}
        \includegraphics[width=\textwidth]{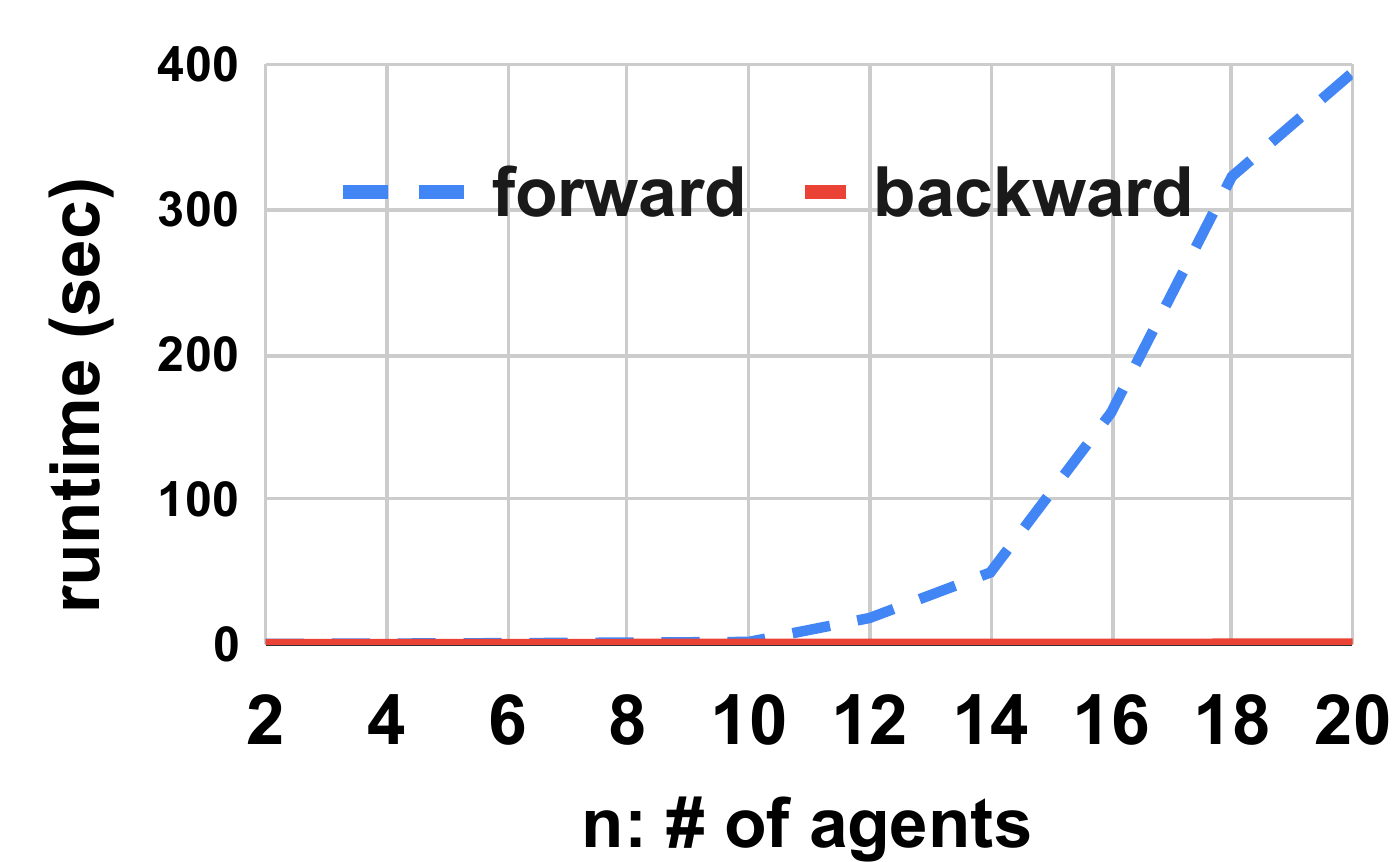}
        \caption{Cyber insurance games with $m=1$ dimensional investment decision and varied $n$.}
        \label{fig:cyber-runtime}
        \centering
    \end{subfigure}
    \caption{We compare the computation cost of equilibrium computation (forward) and the gradient access (backward) per iteration. Backward pass is cheaper than forward pass in all three domains. Gradient-based method runs a forward pass and a backward pass per iteration, while gradient-free method requires many forward passes to perform one step of local search.}
    \label{fig:runtime}
\end{figure*}

\section{Limitation of Theorem~\ref{thm:expectation-of-derivative} and Theorem~\ref{thm:sufficient-conditions}}\label{sec:limitations}
Although Theorem~\ref{thm:expectation-of-derivative} always holds, the main challenge preventing us from directly applying Theorem~\ref{thm:expectation-of-derivative} is that we do not know the equilibrium flow in advance.
Given the probability density function of the equilibrium oracle, we can compute the equilibrium flow by solving the partial differential equation in Equation~\ref{eqn:conservation-law}.
However, the probability density function is generally not given. 

Theorem~\ref{thm:sufficient-conditions} tells us that the derivative computed in Equation~\ref{eqn:kkt_derivative_closed_form} is exactly the equilibrium flow defined by the partial differential equation when the sampled equilibrium admits to an invertible KKT matrix and is locally sampled with a fixed probability.
That is to say, when these conditions hold, we can treat the equilibrium sampled from a distribution over multiple equilibria as a unique equilibrium to differentiate through as discussed in the section of unique Nash equilibrium.
These conditions are also satisfied when the sampled equilibrium is locally stable without any discontinuous jump, generalizing the differentiability of unique Nash equilibrium and globally isolated Nash equilibria to the case with only conditions on the sampled Nash equilibrium.

\section{Dimensionality and Computation Cost}

\subsection{Dimensionality of Control Parameters}
We discuss the solution quality attained and computation costs required by different optimization methods.
To understand the results, it is useful to compare the role and dimensionality of the environment parameter $\environment$ in each setting.
\begin{itemize}
    \item \emph{Normal-form games}: parameter $\environment$ corresponds to the non-negative subsidies provided to each follower for each entry of its payoff matrix. We have $\dim(\environment) = n \prod\nolimits_{i=1}^n m_i = n m^n$, where for simplicity we set $m_i = m$ for all $i$.
    \item \emph{Stackelberg security games}: parameter $\environment$ refers to the non-negative subsidies provided to each follower at each available target. Because each follower $i$ can only cover targets $T_i \subseteq T$, we have $\dim(\environment) = \sum\nolimits_{i=1}^n |T_i| = nm$, where we set $|T_i| = m$ for all $i$.
    \item \emph{Cyber insurance games}: each insurance plan is composed of a premium and a coverage amount. Therefore in total, $\dim(\environment) = 2n$, the smallest out of the three tasks.
\end{itemize}

\subsection{Computation Cost}
\label{sec:computation-cost}
In Figure~\ref{fig:runtime}, we compare the computation cost per iteration of equilibrium-finding oracle (forward) and the gradient oracle (backward). Due to the hardness of the Nash equilibrium-finding problem, no equilibrium oracle is likely to have polynomial-time complexity in the forward pass (computing an equilibrium). We instead focus more on the computation cost of the backward pass (differentiating through an equilibrium).

As we can see in Equation~\ref{eqn:kkt_derivative_closed_form}, the complexity of gradient computation is dominated by inverting the KKT matrix with size $L = O(nm)$ and the dimensionality of environment parameter $\environment$ since the matrix $\frac{d \boldsymbol\decision^*}{d \environment}$ is of size $L \times \dim(\environment)$. Therefore, the complexity of the backward pass is bounded above by $O(L^\alpha) + O(L^2 \dim(\environment)) = O(n^\alpha m^\alpha) + O(n^2 m^2 \dim(\environment))$ with $\alpha = 2.373$.
\begin{itemize}
    \item In Figure~\ref{fig:qre-runtime}, the complexity is given by $O(n^2 m^2 \dim(\environment)) = O(n^3 m^{n+2}) = O(m^5)$ where we set $n=3$ with varied $m$, number of actions per follower, shown in the $x$-axis.
    \item In Figure~\ref{fig:ssg-runtime}, the complexity is $O(n^2 m^2 \dim(\environment)) = O(m^3)$ with $n=5$ and varied $m$, number of actions per follower, shown in the $x$-axis.
    \item In Figure~\ref{fig:cyber-runtime}, the complexity is $O(n^2 m^2 \dim(\environment)) = O(n^3)$ with $m=1$ and varied number of followers $n$ shown in the $x$-axis. The runtime of the forward pass increases drastically, while the runtime of the backward pass remains polynomial.
\end{itemize}
In all three examples, the gradient computation (backward) has polynomial complexity and is faster than the equilibrium finding oracle (forward).
Numerical gradient estimation in gradient-free methods requires repeatedly accessing the forward pass, which can be even more expensive than our gradient computation.

\section{Optimization Reformulation of the Stackelberg Problems with Multiple Followers}
In this section, we describe how to reformulate the leader's optimization problem with multiple followers involved into an single-level optimization problem with stationary and complementarity constraints.
Notice that this reformulation requires the assumption that all followers break ties in favor of the leader, while our gradient-based method can deal with arbitrary oracle access not limited to any tie-breaking rules.

\subsection{Normal-Form Games with Risk Penalty}
In this example, the followers' objectives are defined by:
\begin{align}\label{eqn:qre-individual-payoff}
    f_i(\boldsymbol\decision, \environment) = U_{i}(\boldsymbol\decision) + \environment_i(\boldsymbol\decision) - H(\decision_i) / \lambda,
\end{align}
where $U_i$ is the given payoff matrix and $\environment_i$ is the subsidy provided by the leader. $H$ is the Gibbs entropy denoting the risk aversion penalty.

The leader's objective and the constraint are respectively defined by:
\begin{align}
    f(\boldsymbol\decision, \environment) &= \sum\nolimits_{i \in [n]} U_{i}(\boldsymbol\decision) \nonumber \\
    g(\boldsymbol\decision, \environment) &= \left( \sum\nolimits_{i \in [n]} \environment_i(\boldsymbol\decision) \right) - B \leq 0.  \nonumber
\end{align}

\paragraph{Bilevel optimization formulation}
we can write the followers' best response into the leader's optimization problem:
\begin{align*}
    \max_{\boldsymbol\environment} \quad & f(\boldsymbol\decision) = \sum\nolimits_{i \in [n]} U_i(\boldsymbol\decision) = U(\boldsymbol\decision) \\
    \text{s.t.} \quad & \decision_i \in [0,1]^{m_i}, \boldsymbol1^\top \decision_i = 1 & \forall i \in [n] \\
    & \decision_i = \arg\max_{\decision \in \decisionset_i} f_i(\decision_i, \decision_{-i}, \boldsymbol\environment) & \forall i \in [n] \\
    & \environment(\boldsymbol\decision) \leq B
\end{align*}
where $f_i$ is defined in Equation~\ref{eqn:qre-individual-payoff}.
By converting the inner-level optimization problem to its KKT conditions, we can rewrite the optimization problem as:
\begin{align*}
    \min_{\boldsymbol\environment, \boldsymbol\decision, \boldsymbol\lambda, \boldsymbol\mu, \boldsymbol\nu} & - f(\boldsymbol\decision) = - U(\boldsymbol\decision) \\
    \text{s.t.} \quad & \decision_i, \quad \boldsymbol1^\top \decision_i = 1 & \forall i \in [n] \\
    & \lambda_i, \mu_i \in \R_{\geq 0}^{m_i}, \nu_i \in \R & \forall i \in [n] \\
    & \lambda_{i,j} x_{i,j} = 0 & \forall i \in [n], j \in [m_i] \\
    & \mu_{i,j} (1 - x_{i,j}) = 0 & \forall i \in [n], j \in [m_i] \\
    & - \nabla_{x_i} f_i - \lambda_{i} + \mu_{i} + \nu_i \boldsymbol1 = 0 & \forall i \in [n] \\
    & \environment(\boldsymbol\decision) \leq B
\end{align*}
We add dual variables $\lambda_i, \mu_i$ to the inequality constraints $\decision_{i,j} \geq 0$ and $\decision_{i,j} \leq 1$ respectively.
We also add dual variables $\nu_i$ to the equality constraints $\boldsymbol 1^\top x_i = 1$.
We can explicitly write down the gradient:
\begin{align}
    \nabla_{\decision_i} f_i(\decision_i, \decision_{-i}, \boldsymbol\environment) = (U_i + \boldsymbol\environment_i) (\decision_{-i}) - \sum\nolimits_{j} (1 + \log x_{ij})/\lambda
\end{align}
where $\lambda$ here is a specific constant (different from the Lagrangian multipliers), which is chosen to be $1$ in our implementation.

\subsection{Stackelberg Security Games With Multiple Defenders}
The followers' objectives are defined by:
\begin{align}
    f_i(\boldsymbol\decision, \environment) = \sum\nolimits_{t \in T_i} (U_{i,t} + \environment_{i,t}) (1 - y_t) p_t ,
\end{align}
where $U_{i,t}$ is the loss received by defender~$i$ when target $t$ is successfully attacked, and $\environment_{i,t}$ is the corresponding reimbursement provided by the leader to remedy the loss.
We define $y_t \coloneqq 1 - \prod\nolimits_{i} (1 - \decision_{i,t})$ to denote the effective coverage of target $t$, representing the probability that target $t$ is protected under the overlapping protection patrol plan $\boldsymbol\decision$.
Given the effective coverage of all targets, we assume the attacker attacks target $t$ with probability $p_t = e^{-\omega y_t + a_t}/(\sum\nolimits_{s \in T} e^{-\omega y_s + a_s}) $, where $a_t \in \R$ is a known attractiveness value and $\omega \geq 0$ is a scaling constant.

The leader's objective and constraint are respectively defined by:
\begin{align}
    f(\boldsymbol\decision, \environment) &= \sum\nolimits_{t \in T} U_{t} (1 - y_t) p_t \nonumber \\ g(\boldsymbol\decision, \environment) &= \left( \sum\nolimits_{i,t} \environment_{i,t} (1 - y_t) p_t \right) - B \leq 0, \nonumber
\end{align}
where $U_t < 0$ is the penalty for the leader when target $t$ is attacked without any coverage.

\paragraph{Bilevel optimization formulation}
Similarly, we can also write down the bilevel optimization formulation of the Stackelberg security games with multiple defenders as:
\begin{align*}
    \max_{\boldsymbol\environment} \quad & f(\boldsymbol\decision) = \sum\nolimits_{t \in T} U_{t} (1 - y_t) p_t \\
    \text{s.t.} \quad & \decision_{i,t} \in [0,1] & \forall i \in [n], t \in T_i \\
    & y_{t}, p_t \in \R & \forall t \in T \\
    & \sum\nolimits_{t \in T_i} \decision_{i,t} = b_i & \forall i \in [n] \\
    & y_t = 1 - \prod\nolimits_{i: t \in T_i} (1 - \decision_{i,t}) & \forall t \in T \\
    & p_t = \frac{e^{-\omega y_t + a_t}}{\sum\nolimits_{s \in T} e^{-\omega y_s + a_s}} & \forall t \in T \\
    & \decision_i = \arg\max_{\decision \in \decisionset_i} f_i(\decision_i, \decision_{-i}, \boldsymbol\environment) & \forall i \in [n] \\
    & \sum\limits_{i,t} \left( \environment^u_{i,t} (1 \! - \! y_t) p_t \! + \! \environment^c_{i,t} y_t p_t \right) \! \leq \! B
\end{align*}
where $p_t$ is the probability that attacker will attack target $t$ under protect scheme $\boldsymbol\decision$ and the resulting $\boldsymbol y$. The function $f_i$ is defined in by:
\begin{align}\label{eqn:ssg-inidividual-payoff}
    f_i(\boldsymbol\decision, \environment) = \sum\nolimits_{t \in T_i} (U_{i,t} + \environment_{i,t}) (1 - y_t) p_t.
\end{align}

This bilevel optimization problem can be reformulated into a single level optimization problem if we assume all the individual followers break ties (equilibria) in favor of the leader, which is given by:
\begin{align*}
    \max_{\boldsymbol\environment, \boldsymbol\decision, \boldsymbol\lambda, \boldsymbol\mu, \boldsymbol\nu} ~ & \sum\nolimits_{t \in T} U_{t} (1 - y_t) p_t \\
    \text{s.t.} ~ & \decision_{i,t} \in [0,1] & \forall i \in [n], t \in T_i \\
    & y_{t}, p_t \in \R & \forall t \in T \\
    & \sum\nolimits_{t \in T_i} \decision_{i,t} = b_i & \forall i \in [n] \\
    & y_t = 1 - \prod\nolimits_{i: t \in T_i} (1 - \decision_{i,t}) & \forall t \in T \\
    & p_t = \frac{e^{-\omega y_t + a_t}}{\sum\nolimits_{s \in T} e^{-\omega y_s + a_s}} & \forall t \in T \\
    & \lambda_{i,t}, \mu_{i,t} \in \R_{\geq 0}, \nu_i \in \R_{\geq 0} & \forall i \in [n], t \in T_i \\
    & \lambda_{i,t} x_{i,t} = 0 & \forall i \in [n], t \in T_i \\
    & \mu_{i,t} (1 - x_{i,t}) = 0 & \forall i \in [n], t \in T_i \\
    & - \nabla_{x_i} f_i - \lambda_{i} + \mu_{i} + \nu_i \boldsymbol1 = 0 & \forall i \in [n] \\
    & \sum\limits_{i,t} \left( \environment^u_{i,t} (1 \! - \! y_t) p_t \! + \! \environment^c_{i,t} y_t p_t \right) \! \leq \! B
\end{align*}
Similarly, we add dual variables $\lambda_{i,t}, \mu_{i,t}, \nu_{i}$ to constraints $x_{i,t} \geq 0$, $x_{i,t} \leq 1$, and $\sum\nolimits_{t \in T_i} x_{i,t} = b_i$.

\subsection{Cyber Insurance Games}
The followers' objectives are defined by:
\begin{align}\label{eqn:cyber-individual-payoff}
    f_i(\boldsymbol\decision, \environment) = - c_i \decision_i - \rho_i - (L_i - I_i) q_i - \gamma |L_i - I_i| \sqrt{q_i (1 - q_i)},
\end{align}
where $c_i$ is the unit cost of the protection $\decision_i$ and $L_i$ is the loss when the computer is attacked.
The insurance plan offered to agent~$i$ is denoted by $(\rho_i, I_i)$, where $\rho_i$ is the fixed premium paid to enroll in the insurance plan and $I_i$ is the compensation received when the computer is attacked.

We assume the computer is attacked with a probability $q_i$, where $q_i = \sigma(- \sum\nolimits_{j=1}^n w_{ij} \decision_j + v L_i)$ with $\sigma$ being sigmoid function, a matrix $W = \{w_{ij} > 0\}_{i,j \in [n]}$ to represent the interconnectedness between agents, $v \geq 0$ to reflect the attacker's preference over high-value targets, and lastly it depends on the loss $L_i$ incurred by agent $i$ when attacked.
This attack probability is a smooth non-convex function, which makes the reformulation approach hard and the non-convexity can lead to multiple equilibria reached by the followers.

The last term in Equation~\ref{eqn:cyber-individual-payoff} is the risk penalty to agent~$i$. This term is the standard deviation of the loss received by agent~$i$. We assume the agent is risk averse and thus penalized by a constant time of the standard deviation.

On the other hand, the leader's objective is defined by:
\begin{align}
    f(\boldsymbol\decision, \environment) &= \sum\nolimits_{i=1}^n - I_i q_i + \rho_i \nonumber
\end{align}
where the leader's objective is simply the total revenue received by the insurer, which includes the premium collected from all agents and the compensation paid to all agents.

The constraints are the individual rationality of each agent, where the customized insurance plan needs to incentivize the agent to purchase the insurance plan. In other words, the compensation $I_i$ and premium $\rho_i$ must incentivize agents to purchase the insurance plan by making the payoff with insurance no worse than the payoff without.
\begin{align}
    g_i(\boldsymbol\decision,\environment) &= \left(- c_i \decision_i - L_i q_i - \gamma L_i \sqrt{q_i(1 - q_i)} \right) - f_i(\boldsymbol\decision, \environment) \leq 0 \nonumber.
\end{align}

\paragraph{Bilevel optimization reformulation}
The bilevel optimization formulation for the cyber insurance domain with an external insurer is given by:
\begin{align*}
    \max_{\boldsymbol\environment} \quad & f(\boldsymbol\decision) = \sum\nolimits_{i = 1}^n -I_i q_i + \rho_i \\
    \text{s.t.} \quad & \decision_{i} \in [0,\infty) & \forall i \in [n] \\
    & q_i = \sigma \left(- \sum\nolimits_{j=1}^n w_{ij} x_j + v L_i \right) & \forall i \in [n] \\
    & \decision_i = \arg\max_{\decision'_i \in \decisionset_i} f_i(\decision'_i, \decision_{-i}, \environment) & \forall i \in [n] \\
    & - \! c_i x_i \! - \! L_i q_i \! - \! \gamma L_i \sqrt{q_i (1 \! - \! q_i)} \leq f_i(\boldsymbol\decision, \environment) & \forall i \in [n]
\end{align*}
where $f_i(\boldsymbol\decision, \environment) = -c_i x_i - \rho_i - (L_i - I_i) q_i - \gamma \norm{L_i - I_i} \sqrt{q_i (1 - q_i)}$.

Reformulating this bilevel problem into a single level optimization problem, we have:
\begin{align*}
    \max_{\boldsymbol\environment, \boldsymbol\decision, \lambda} \quad & f(\boldsymbol\decision) = \sum\nolimits_{i = 1}^n -I_i q_i + \rho_i \\
    \text{s.t.} \quad & \decision_{i} \in [0,\infty), \lambda_{i} \in [0,\infty) & \forall i \in [n] \\
    & q_i = \sigma \left(- \sum\nolimits_{j=1}^n w_{ij} x_j + v L_i \right) & \forall i \in [n] \\
    & \decision_i \lambda_i = 0 & \forall i \in [n] \\
    & - \! c_i x_i \! - \! L_i q_i \! - \! \gamma L_i \sqrt{q_i (1 \! - \! q_i)} \leq f_i(\boldsymbol\decision, \environment) & \forall i \in [n] \\
    & - \nabla_{\decision_i} f_i - \lambda_i = 0 & \forall i \in [n]
\end{align*}
with dual variables $\lambda_{i}$ for the $x_{i} \geq 0$ constraint.

\section{Experimental Setup}\label{sec:experimental-setup}
For reproducibility, we set the random seeds to be from $1$ to $30$ for NSGs and cyber insurance games, and from $1$ to $100$ for SSGs.

\subsection{Normal-Form Games}
In NFGs, we randomly generate the payoff matrix $U_i \in \R^{m_1 \times m_2 \times \cdots \times m_n}$ of follower $i$ with each entry of the payoff matrix randomly drawn from a uniform distribution $U(0, 10)$.
We assume there are $n=3$ followers.
Each follower has three pure strategies to use $m_i = m = 3$ for all $i$.
The risk aversion penalty constant is set to be $\lambda = 1$.

\subsection{Stackelberg Security Games}
In SSGs, we randomly generate the penalty $U_{i,t} < 0$ of each defender $i$ associated to each target $t \in T_i \subset T$ from a uniform distribution $U_{i,t} \sim U(-10, 0)$.
The leader's penalty $U_t < 0$ is also generated from the same uniform distribution $U_t \sim U(-10, 0)$.
We assume there are $n=5$ followers in total.
There are $|T| = 100$ targets and each follower is able to protect $|T_i| = m = 50$ targets randomly sampled from all targets.
Each follower can spend at most $b_i = 10$ effort on the available targets.
The attractiveness values $a_t$ used to denote the attacker's preference is randomly generated from a normal distribution $a_t \in \mathcal{N}(0,1)$ with $0$ mean and standard deviation $1$.
The scaling constant is set to be $\omega = 5$.

\subsection{Cyber Insurance Games}
In cyber insurance games, for each follower $i$, we generate the unit protection cost $c_i$ from a uniform distribution $c_i \sim U(5,10)$ , and the incurred loss $L_i$ from a uniform distribution $L_i \sim U(50,100)$.
We assume there are in total $n=10$ followers.
Each follower can only determine their own investment and thus $m=1$.
The entry of the correlation matrix $W \in \R^{n \times n}$ is generated from uniform distributions $W_{i,j} \sim U(0,1)$ if $i \neq j$, and $W_{i,j} \sim U(1,2)$ if $i = j$ to reflect the higher dependency on the self investments.
We choose the risk aversion constant $\gamma$ to be $\gamma = 0.01$.

\section{Computing Infrastructure}\label{sec:computing-infrastructure}
All experiments except VI experiments were run on a computing cluster, where each node is configured with 2 Intel Xeon
Cascade Lake CPUs, 184 GB of RAM, and 70 GB of local scratch space.
VI experiments require a Knitro license and were run on a machine with i9-7940X CPU @ 3.10GHz with 14 cores and 128 GB of RAM.
Within each experiment, we did not implement parallelization, so each experiment was purely run on a single CPU core.

\end{document}